
\documentclass[nohyperref]{article}

\usepackage[noend]{algpseudocode}
\usepackage[accepted]{icml2022}

\usepackage{times}

\usepackage{soul}
\usepackage{url}
\usepackage[hidelinks]{hyperref}
\usepackage[utf8]{inputenc}
\usepackage[small]{caption}
\usepackage{graphicx}
\usepackage{amsmath}
\usepackage{booktabs}
\usepackage{amsfonts}
\usepackage{amssymb}
\urlstyle{same}
\usepackage{multirow}

\usepackage{xcolor}
\usepackage{tabularray}

\usepackage{placeins}
\usepackage{float, subcaption}
\usepackage{bm}
\usepackage{amsthm}

	\usepackage{bbm} 
	\let\emptyset\varnothing 

\usepackage{mathtools}
\usepackage{bbm}
\usepackage{tikz}
\usepackage{tikz-3dplot}
\usepackage{pgfplots}
\usepgfplotslibrary{ternary}
\usetikzlibrary{calc,shapes}
\usepackage{dsfont}
\usepackage{enumitem}
\usepackage{thm-restate}
\usepackage{multirow}
\usepackage{newfloat}
\usepackage{wrapfig}
\usepackage{mleftright}
\usepackage[capitalise,noabbrev]{cleveref}

\usepackage{pifont}
\usepackage[normalem]{ulem}


\setlist{nosep,leftmargin=*}

\theoremstyle{plain}
\newtheorem{theorem}{Theorem}[section]
\newtheorem{proposition}[theorem]{Proposition}

\theoremstyle{definition}
\newtheorem{definition}[theorem]{Definition}

\theoremstyle{remark}

\usepackage[textsize=tiny]{todonotes}
\usepackage{xspace}

\icmltitlerunning{Team-Public-Information Representation for Adversarial Team Games}

\begin{document}

\twocolumn[
\icmltitle{A Marriage between Adversarial Team Games and 2-player Games: \\ Enabling Abstractions, No-regret Learning, and Subgame Solving}



\icmlsetsymbol{equal}{*}

\begin{icmlauthorlist}
\icmlauthor{Luca Carminati}{polimi}
\icmlauthor{Federico Cacciamani}{polimi}
\icmlauthor{Marco Ciccone}{polito}
\icmlauthor{Nicola Gatti}{polimi}
\end{icmlauthorlist}

\icmlaffiliation{polimi}{Politecnico di Milano}
\icmlaffiliation{polito}{Politecnico di Torino}

\icmlcorrespondingauthor{Luca Carminati}{luca.carminati@polimi.it}

\icmlkeywords{Algorithmic Game Theory, Cooperative AI, Machine Learning, ICML}

\vskip 0.3in
]



\printAffiliationsAndNotice{}  
%

\makeatletter
\DeclareRobustCommand\onedot{\futurelet\@let@token\@onedot}
\def\@onedot{\ifx\@let@token.\else.\null\fi\xspace}

\def\eg{\emph{e.g}\onedot} \def\Eg{\emph{E.g}\onedot}
\def\ie{\emph{i.e}\onedot} \def\Ie{\emph{I.e}\onedot}
\def\cf{\emph{c.f}\onedot} \def\Cf{\emph{C.f}\onedot}
\def\etc{\emph{etc}\onedot} \def\vs{\emph{vs}\onedot}
\def\wrt{w.r.t\onedot} \def\dof{d.o.f\onedot}
\def\etal{\emph{et al}\onedot}
\makeatother


\newcommand{\argmin}{arg\min}
\newcommand{\argmax}{arg\max}
\newcommand{\Reals}{\mathbb{R}}
\newcommand{\team}{\mathcal{T}}
\newcommand{\Z}{\mathcal{Z}}
\newcommand{\St}{\mathcal{S}}
\newcommand{\N}{\mathcal{N}}

\newcommand{\I}{\mathcal{I}}
\newcommand{\algoname}[1]{\textnormal{\textsc{#1}}}

\definecolor{mygreen}{rgb}{0.0, 0.5, 0.0}
\newcommand{\nb}[3]{{\colorbox{#2}{\bfseries\sffamily\scriptsize\textcolor{white}{#1}}}{\textcolor{#2}{\sf\small\textit{#3}}}}

\newcommand{\lu}[1]{\nb{Luca}{blue}{#1}}
\newcommand{\ma}[1]{\nb{Marco}{red}{#1}}
\newcommand{\fe}[1]{\nb{Federico}{mygreen}{#1}}
\newcommand{\nc}[1]{\nb{Nicola}{violet}{#1}}

\setlength{\belowdisplayskip}{3pt} \setlength{\belowdisplayshortskip}{3pt}
\setlength{\abovedisplayskip}{3pt} \setlength{\abovedisplayshortskip}{3pt}
\setlength{\itemsep}{0pt}
\setlength{\topsep}{1pt}
\setlength{\parskip}{5pt}

\begin{abstract}
    \emph{Ex ante} correlation is becoming the mainstream approach for \emph{sequential adversarial team games}, where a team of players  faces  another team in a zero-sum game. It is known that  team members' asymmetric information makes both equilibrium computation \textsf{APX}-hard and team's strategies not directly representable on the game tree. This latter issue prevents the adoption of successful tools for huge 2-player zero-sum games such as, \emph{e.g.}, abstractions, no-regret learning,  and subgame solving.  This work shows that we can recover from this weakness by bridging the gap between sequential adversarial team games and 2-player games. In particular, we propose a new, suitable game representation that we call \emph{team-public-information}, in which a team is represented as a single coordinator who only knows information common to the whole team  and prescribes to each member an action for any possible private state. The resulting representation is highly \emph{explainable}, being a 2-player tree in which the team's strategies are behavioral with a direct interpretation and more expressive than the original extensive form when designing abstractions. Furthermore, we prove payoff equivalence of our representation, and we provide techniques that, starting directly from the extensive form, generate dramatically more compact representations without information loss.
    Finally, we experimentally evaluate our techniques when applied to a standard testbed, comparing their performance with the current state of the art.
\end{abstract}
\section{Introduction}\label{sec:intro}%
Research efforts on imperfect-information games customarily focus on 2-player zero-sum games (\textit{``2p0s''} games from here on), in which two players act receiving opposite payoffs. In this setting, superhuman performances have been achieved in real-world instances, such as \textit{Poker Hold'em} \cite{BrownS17,Brown885,Moravck2017DeepStackEA} and \textit{Starcraft II}  \cite{Vinyals2019GrandmasterLI}.
%
%
The successful approach for 2p0s games is generally based on the generation of a game abstraction used \emph{offline} to find a blueprint strategy which is refined \emph{online} during the play.
%

In our work, we focus on sequential \emph{adversarial team games} in which a team of 2 (or more) players cooperates against a common adversary or team of adversaries. In particular, we focus on  \textit{ex ante coordination}, in which the team members agree on a common strategy beforehand and commit to playing it during the game without communicating any further. The team members share the same payoffs and coordinate against an adversary having opposite payoffs, in face of private information given separately to each team member. Examples include collusion in poker games, bidding in the game of Bridge, and a team of drones acting against an intruder.  \citeauthor{Celli2018ComputationalRF}~(\citeyear{Celli2018ComputationalRF}) show that the computation of a solution, called Team Maxmin Equilibrium with Correlation (TMEcor), is \textsf{APX}-hard.  Furthermore, team members' asymmetric information makes a team equivalent to a single-player without \emph{perfect recall} and therefore, as showed by \citeauthor{kuhn1953}~(\citeyear{kuhn1953}), \emph{behavioral} strategies defined on the game tree and \emph{normal-form} strategies are not realization equivalent. In particular, normal-form strategies may lead to arbitrarily better outcomes than behavioral strategies. However, this comes at the cost of an exponential explosion of the strategy space and the impossibility to use tools for huge 2p0s games as normal-form strategies are not directly representable on game trees.


\textbf{Related Work.} 
To the best of our knowledge, \citet{Celli2018ComputationalRF} are the first to compute the TMEcor of an adversarial team game by proposing the Hybrid Column Generation (HCG) algorithm. At each iteration, HCG exploits a Linear Program (LP) to compute a max-min solution and then an Integer LP (ILP) to find the team's best response to be added to the LP at the next iteration. 
Successively, \citet{Farina2018ExAC}~propose a variant of HCG, called Fictitious Team Play (FTP), in which the LP computing the max-min strategy is replaced by a step of the Fictitious Play algorithm~\cite{BrownFP}. 
Later, \citet{DBLP:conf/aaai/00010C21}, \citet{DBLP:conf/icml/0004020}, 
\citet{DBLP:conf/aaai/Zhang020}, \citet{Farina2021ConnectingOE} propose more efficient flavours of HCG and FTP algorithms.
Among the above algorithms, the Faster Column Generation (FCG) algorithm~\cite{Farina2021ConnectingOE} provides the best empirical performance.
%
The rationale behind this class of approaches is to incrementally expand the LP strategy space to guess the actions in the equilibrium support without necessarily enumerating an excessively large portion of the space. The main weakness of this approach is the necessity to solve an ILP, which severely limits its scalability to large game instances even for the evaluation of the exploitability of a suboptimal solution.
A recent alternative is proposed by \citet{zhang2022team}. The authors provide a generalization of the sequence form which, thanks to a suitable tree decomposition of the constraints, allows the description of a team's strategy space by a polytope. Thus, a TMEcor can be found by linear programming. This approach outperforms FCG with instances in which the degree of private information is limited. The idea to provide a convex representation of the strategy space adopted by \citet{zhang2022team} is closely related to ours. The main differences reside in a better interpretability of our representation, together with the possibility to adopt abstractions~\citet{Sandholm2015AbstractionFS,DBLP:conf/aaai/GilpinSS07}, no-regret learning~\cite{Zinkevich2007RegretMI,celli2020no}, and subgame solving~\cite{Brown2018DepthLimitedSF,brown2017safe}.

We also mention Multi-Agent Reinforcement Learning (MARL) approaches proposed by~\citet{Celli2019CoordinationIA} and \citet{Cacciamani2021MultiAgentCI}.
These algorithms rely on implicit abstractions yielded by deep reinforcement learning to reduce the complexity of the problem. However, these approaches provide theoretical guarantees only in games in which team members have symmetric observability over other players' actions (chance included). 

\textbf{Original Contributions.} 
As a preliminary step of our work, we first enrich the canonical extensive-form representation to capture information about public team members' observations. 
We call it extensive-form game \emph{with visibility} (vEFG). %
Exploiting this representation,  we provide an algorithmic procedure, called \algoname{PublicTeamConversion}, to convert an  instance of adversarial team games into a 2p0s game, where a team is represented as a single coordinator who only knows information common to all team members and prescribes to each member an action for any possible private state.
We formally prove that a Nash equilibrium of the converted game corresponds to a TMEcor in the original game and \emph{vice versa}, thus enabling, for the first time, to the best of our knowledge, the adoption of techniques for 2p0s games to adversarial team games.

Differently from the representations previously proposed in the state of the art, \emph{e.g.}, that by~\citet{zhang2022team}, our representation is highly \textit{explainable}, since the team's strategies are behavioral over the game tree with a direct interpretation. More precisely, \emph{the coordination prescriptions sent to the team members in each public state can be interpreted as shared team conventions}. 
Remarkably, our representation also extends to adversarial settings the research line previously developed by \citet{Nayyar2013DecentralizedSC} and applied  to  cooperative games by \citet{Foerster2019BayesianAD} and \citet{Sokota2021SolvingCG}, thus bridging the two approaches.

Furthermore, we show that our representation is more expressive than the extensive form as state/action abstractions applied to the extensive-form game can be captured by our representation, while the reverse does not hold. 
More importantly, the direct interpretability of our representation allows the design of techniques to prune and abstract the trees. In particular, we show that our techniques return a game representation with a size smaller than that generated by \citet{zhang2022team}, while guaranteeing explainability. Finally, we empirically evaluate the performance of no-regret algorithms applied to our representations.
\section{Preliminaries}\label{sec:prelim}
We introduce the basic concepts and definitions used throughout this work. For more details, we point an interested reader to \citeauthor{shoham-leyton}~(\citeyear{shoham-leyton}).
\paragraph{Extensive-form Games and Adversarial Team Games}

The basic model for sequential interactions among a set~$\mathcal{N}$ of $N$ players with private information is the \emph{Extensive-Form Game with imperfect information} (EFG). 
An EFG is a tuple $(\mathcal N,  \mathcal H, \mathcal Z, \iota, \mathcal A, A, \mathcal I, \{u_p\}_{p\in\mathcal N})$ defining a tree where the set of nodes is denoted by $\mathcal{H}$ and the set of leaves (a.k.a.~terminal nodes) is denoted by $\mathcal{Z}\subseteq\mathcal{H}$. 
The player acting at node $h\in\mathcal{H}$ is returned by function $\iota(h) \in \mathcal{N}$.
Set $\mathcal{A}= \cup_{p \in \mathcal{N}}\mathcal{A}_p$ contains all the possible actions, where $\mathcal{A}_p$ is the set of actions available to player~$p \in \mathcal{N}$. Given a node $h$, the set of available actions at $h$ is $A(h)$. We also refer to a node $h$ as a \emph{history}, meaning the sequence of all the actions from the root to node $h$. 
Let $u_p: \mathcal{Z}\to\Reals$ be the \textit{payoff function} of player~$p$ mapping every terminal node to a utility value.
%
%
In order to account for imperfect information, we use \emph{information sets} (for brevity, infosets).
An infoset (also called private state) $I\subseteq\mathcal{H}\setminus \mathcal{Z}$ is a partition of the player~$p$'s nodes that are indistinguishable to~$p$. 
We denote the set of player $p$'s infosets as $\mathcal{I}_p$ and the set of all information sets as $\mathcal I = \cup_{p\in\mathcal N} \mathcal I_p$.
With notation overload, we use $\iota(I)$ and $A(I)$ in place of $\iota(h)$ and $A(h)$ where $h \in I$ and we denote as $I(h)$ the infoset corresponding to node $h$, for any $h\in\mathcal H$.
%
%

We focus on Adversarial Team Games (ATGs).
An ATG is an $N$-player EFG in which a \emph{team} of players $\mathcal{T}\subseteq\mathcal{N}$ plays against an opponent $o$ (or a team of players). 
If chance player $c$ is present, we enrich the set of players with it.
Thus, $\mathcal{N} = \mathcal{T}\cup\left\{ o\right\}\cup\left\{ c\right\}$. 
A team is a set of players sharing the same utility function.
Formally, $\forall p \in \mathcal{T}$, $u_p = u_\mathcal{T}$ for some function $u_\mathcal{T}$. 
We restrict our analysis to zero-sum ATGs, in which  $u_{\mathcal{T}} = -u_o$. 
Note that since chance $c$ is a non-strategic player, its payoff is not defined.
For an EFG, a \emph{deterministic timing} is a labeling of the nodes in $\mathcal{H}$ with natural numbers such that the label of any node is strictly higher than the label of its parent. 
A deterministic timing is \emph{exact} if all nodes in the same information set have the same label, and the game is called \emph{timeable}. 
Furthermore, an EFG is \emph{1-timeable}, when admitting an exact timing where the difference between the labels of the nodes and their parents is one.
Furthermore, we focus on \emph{perfect recall} games, in which no player forgets information.
Exploiting the property of 1-timeability, we can define an ordering between different nodes. In particular, for two nodes $h,h'\in \mathcal H$ we say that $h$ precedes $h'$ (denoted as $h\preccurlyeq h'$) if the label assigned to $h$ is smaller than the label assigned to $h'$ and in the path from the root of the game tree to $h'$, node $h$ is encountered. With a slight abuse of notation, for two infosets $I,J\in \mathcal I$, we write that $I\preccurlyeq J$ if  there exists $h \in I, h'\in J$ such that $h\preccurlyeq h'$. 
In addition, given an infoset $I$, the set of team members that will play in some infoset following $I$ is denoted with $\mathcal{T}_I := \{p\in\mathcal T \mid  \exists J \in \mathcal I_p\, \text{s.t. } I\preccurlyeq J\}$.

\paragraph{Strategies and Nash Equilibrium}
Game theory provides various strategy representations in EFGs. 
A \emph{behavioral strategy} $\sigma_p: \mathcal{I}_p\to\Delta^{|A(I)|}$ is a function that maps each infoset $h$ to a probability distribution over available actions $A(h)$. 
A \emph{normal-form plan} (or \emph{pure strategy}) $\pi_p\in\Pi_p:= \bigtimes_{I\in\mathcal{I}_p}A(I)$ is a tuple specifying one action for each infoset, while a \emph{normal-form strategy} $\mu_p\in\Delta^{|\Pi_p|}$ is a probability distribution over normal-form plans.
\citeauthor{kuhn1953}~(\citeyear{kuhn1953}) show that behavioral and normal-form strategies are equivalent in perfect-recall games, while this does not hold with imperfect recallness where normal-form strategies are (usually) more expressive than behavioral.
A \emph{reduced normal-form strategy} $\mu^\star_p$ is obtained from a normal-form strategy $\mu_p$ by aggregating plans distinguished by action played in unreachable nodes.
With a slight abuse of notation, $\forall p \in\mathcal{N}$, we denote with $\sigma_p[z]$ (respectively $\mu_p[z]$) the probability of reaching terminal node $z\in\Z$ when following strategy $\sigma_p$ (resp. $\mu_p$). 
A strategy profile is a tuple associating a strategy to each player in the game. 
We denote normal-form strategy profiles with $\boldsymbol{\mu}$ and behavioral strategy profiles with $\boldsymbol{\sigma}$. 
Given a strategy profile $\boldsymbol{\mu}$, we denote with $\mu_p$ the strategy of player $p\in\N$ and with $\boldsymbol{\mu}_{-p}$ the strategies of all the other players. 
With an abuse of notation, the expected utility for player $p$ when she plays strategy $\mu_p$ and all the other players play strategy $\boldsymbol{\mu}_{-p}$ is $u_p(\mu_p, \boldsymbol{\mu}_{-p})$.
Furthermore, we define the \emph{best response} of player $p$ to strategy profile $\boldsymbol{\mu}_{-p}$ as the strategy that maximizes player $p$'s utility against strategy $\boldsymbol{\mu}_{-p}$.
Formally, $\mathsf{BR}_p(\boldsymbol{\mu}_{-p}) := \arg\max_{\mu} u_p(\mu, \boldsymbol{\mu}_{-p})$.
A strategy profile $\boldsymbol{\mu}$ is a Nash Equilibrium (NE) if it is stable with respect to unilateral deviations of a single player. 
Formally, $\boldsymbol{\mu}$ is a NE if and only if $\forall p \in \mathcal{N}$, $\mu_p \in \mathsf{BR}_p(\boldsymbol{\mu}_{-p})$.
\paragraph{\emph{Ex ante} Coordination in ATGs}
\citeauthor{Basilico2016teammaxmin}~(\citeyear{Basilico2016teammaxmin}) show that the team's expected payoff in a Nash equilibrium can be arbitrarily smaller than the payoff in a \emph{Team Maxmin Equilibrium}, introduced by \citeauthor{VonStengelKoller97}~(\citeyear{VonStengelKoller97}), which in its turn can be arbitrarily smaller than the payoff in a Team Maxmin Equilibrium with Correlation strategies.
%
%
%
The TMEcor can be computed through a LP formulated over the joint normal-form plans of the team players: 
\begin{equation}\label{eq:tmecor}
\begin{array}{l}\displaystyle
\max_{\mu_\team}\min_{\mu_o} \sum_{z\in\Z}\hspace{.4cm} \mu_\team[z]\,\mu_o[z]\,u_\team(z)\\
\hspace{.3cm}\textnormal{s.t. }\hspace{.5cm}\mu_\team\in\Delta(\bigtimes\limits_{p\in\team}\Pi_p)\\[4mm]
\hspace{1.3cm}\mu_o\in\Delta(\Pi_o).
\end{array}
\end{equation}
The team strategy space $\bigtimes_{p\in\team} \Pi_p$ can grow exponentially in the size of the game tree, thus making Problem~\eqref{eq:tmecor} unaffordable in practice except for toy games.
\section{Extensive-Form Games with Visibility Representation}\label{sec:vefg}
We introduce the concept of \emph{Extensive-Form Game with visibility}. This representation allows us to explicitly capture the information common to a set of players (\emph{e.g.}, team members) and to extend the notion of infoset accordingly. 
%

\paragraph{Public Function.} We first introduce a function $Pub_p:\mathcal{A}\to\{\text{obs, unobs}\}$, $\forall p\in\mathcal{N}$, specifying whether action $a\in\mathcal{A}$ is \emph{observable} or \emph{unobservable}, respectively, by a single player~$p$ when $a$ is played by another player.
Note that our definition of $Pub_p$ does not depend on the nodes in which player~$p$ plays, and therefore it cannot capture potential imperfect recallness in which $p$ forgets actions observed before.
Trivially, the information structure of every perfect-recall game is induced by some $\{Pub_{p}\}_{p \in \mathcal{N}}$:
\begin{proposition}
Any pair of histories $h,h'$ of player~$p$ belong to the same infoset when the actions $a$ in $h$ observable by $p$ and the actions $a'$ in $h'$ observable by $p$ are the same, formally, when ${(a)_{a\in h : Pub_{p}(a)=\mathrm{obs}} = (a')_{a'\in h' : Pub_{p}(a')=\mathrm{obs}}}$. 
\end{proposition}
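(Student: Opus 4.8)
The statement, together with the sentence preceding it, is an \emph{existence} claim: given a perfect-recall EFG, one must construct functions $\{Pub_p\}_{p\in\mathcal N}$ whose induced partition — grouping two histories of player~$p$ exactly when their subsequences of $Pub_p$-observable actions coincide — is precisely the information partition $\mathcal I_p$. I would use the obvious construction. For every player $p$, set $Pub_p(a)=\mathrm{obs}$ whenever $a\in\mathcal A_p$ (a player always sees her own moves, which is consistent with perfect recall); for an action $a$ of another player or of chance, set $Pub_p(a)=\mathrm{obs}$ iff $a$ is \emph{consistently observed} by $p$, meaning that no information set $J\in\mathcal I_p$ contains two histories that disagree on whether (and, invoking 1-timeability, at which depth) $a$ is played along the path to them; otherwise $Pub_p(a)=\mathrm{unobs}$.

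With this choice I would prove the two inclusions separately. For ``$\mathcal I_p$ refines the induced partition'' — i.e.\ $h,h'$ in the same infoset $I$ have equal observable subsequences — I would combine two facts: (i) by perfect recall $h$ and $h'$ induce the same sequence of $p$'s past infosets and of the actions $p$ chose there, hence the same subsequence of $\mathcal A_p$-actions; and (ii) for any other-player or chance action $a$ with $Pub_p(a)=\mathrm{obs}$, applying ``consistently observed'' to $J=I$ forces $a$ to occur on $h$ iff it occurs on $h'$, at the same depth. By 1-timeability the full observable subsequences, obtained by interleaving these pieces in time order, then coincide. For the direction actually written — equal observable subsequences imply the same infoset — I would argue by contraposition: if $h,h'$ are histories of $p$ in distinct infosets, trace down from the root to the last node $g$ common to both paths; the two actions taken at $g$ are distinct. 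If $g$ is a node of $p$, these are own actions, hence observed, so the two subsequences already differ at that position; if $g$ belongs to another player or chance, one shows that either that action or some later action on one of the two paths must be flagged $\mathrm{obs}$ by the construction — otherwise $p$ would not have been separating $h$ from $h'$ at all — and again the subsequences differ.

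The main obstacle is exactly this last step: ruling out that two histories which player~$p$ genuinely distinguishes nevertheless receive the same observable subsequence. This is where perfect recall is indispensable, and the cleanest route is a downward induction on the tree (ordered by $\preccurlyeq$) showing that the observable subsequence of a history already determines its information set; repeated action labels shared across different infosets, and the precise interleaving of observations with $p$'s own moves, are the bookkeeping details that must be controlled using the exact-timing structure. The word ``trivially'' in the text is fair in that the \emph{construction} is canonical and each verification step is elementary, but a fully rigorous treatment of the reverse inclusion does require this inductive argument.
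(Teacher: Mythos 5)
The paper gives no proof of this proposition: it is asserted as ``trivial'' and in effect functions as the definition of how $\mathcal I$ is induced by $\{Pub_p\}_{p\in\mathcal N}$ in a vEFG, so there is no argument of the authors' to compare yours against. Your reading of the statement as an existence claim, your canonical construction ($Pub_p(a)=\mathrm{obs}$ for $p$'s own actions, and for other players' actions exactly when $a$ is consistently observed across every infoset of $p$), and your forward inclusion (same infoset implies equal observable subsequences, via perfect recall for own actions and the consistency condition for the rest) are all sound.

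The genuine gap is in the reverse inclusion, at precisely the step you flag as the main obstacle: the claim that if $p$ separates $h$ from $h'$ then some action on one of the two paths below their last common node must be flagged $\mathrm{obs}$ by the construction. This step cannot be proved, because the proposition is false for general perfect-recall timeable EFGs. Concretely: let chance play $L$ or $R$; let the two resulting nodes form a single infoset of another player with shared action set $\{a,b\}$; let player $p$ then move with information partition $\{La\},\{Lb\},\{Ra,Rb\}$. This game is $1$-timeable and has perfect recall (each player acts once), yet no $Pub_p:\{L,R,a,b\}\to\{\mathrm{obs},\mathrm{unobs}\}$ induces this partition: keeping $Ra$ and $Rb$ together forces $Pub_p(a)=Pub_p(b)=\mathrm{unobs}$, which then collapses $La$ and $Lb$ into one class. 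The obstruction is context-dependent observability, which a node-independent $Pub_p$ cannot express (the paper itself remarks on this limitation, though only in connection with imperfect recall). So the statement is not merely ``non-trivial but provable by downward induction,'' as your last paragraph suggests; it holds only under an unstated additional hypothesis that the given information partition is already generated by action-level, history-independent observations, which is how the paper actually uses it when defining vEFGs. No bookkeeping via exact timings will close your reverse inclusion without that hypothesis.
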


With notation overload, the definition of function $Pub$ can be extended to a set of players $\mathcal{P}$ (\emph{e.g.}, a team) as $Pub_\mathcal{P}: \mathcal{A}\to\{\mathrm{pub}, \mathrm{priv}, \mathrm{hidden}\}$, $\forall \mathcal P\subseteq\mathcal N$:
\begin{align*}
& Pub_{\mathcal{P}}(a) = \mathrm{pub} \iff \forall p \in \mathcal{P}:Pub_p(a)=\mathrm{obs};\\
& Pub_{\mathcal{P}}(a) = \mathrm{hidden} \iff \forall p \in \mathcal{P}:Pub_p(a) = \mathrm{unobs};\\
& Pub_{\mathcal{P}}(a) = \mathrm{priv} \text{ otherwise}.
\end{align*}
Informally, action~$a$ is called $\mathrm{pub}$ for a set of players $\mathcal{P}$, when it is observable by all the players of that set; $\mathrm{hidden}$, when it is not observable by all these players (notice that in this case $a$ is played by a player not belonging to $\mathcal{P}$); and $\mathrm{priv}$ when some player(s) in $\mathcal{P}$ can observe it, while some other player(s) in $\mathcal{P}$ cannot.
%
%
%
Finally, we can extend the standard definition of Extensive-Form game:
\begin{definition}[Extensive-Form Game with Visibility]
An Extensive-Form Game with Visibility (vEFG) is a tuple defined as $(\mathcal N, \mathcal H, \mathcal Z, \iota, \mathcal A, A,\mathcal I, \{u_p\}_{p\in\mathcal N}, \{Pub_p\}_{p \in \mathcal{N}})$ where $\mathcal{I}$ is induced by $\{Pub_p\}_{p \in \mathcal{N}}$ as discussed above, and therefore every player is with perfect recall.
\end{definition}


\subsection{Beyond Infoset: Public State}
 By means of $Pub_{\mathcal{P}}$, we can introduce the notion of \textbf{public state} for a set of players $\mathcal{P} \subseteq \mathcal{N}$, which extends the notion of infoset to a set of players. 
\begin{definition}[Public State]
A public state $S$ is a subset of nodes $\mathcal{H}$ such that any pair of histories $h,h'$ of potentially different players in $\mathcal{P}$ belong to $S$ when the actions $a$ in $h$ that are public for $\mathcal{P}$ and the actions $a'$ in $h'$ that are public for $\mathcal{P}$ are the same, formally, ${(a)_{a\in h : Pub_{\mathcal{P}}(a)=\mathrm{pub}} = (a')_{a'\in h' : Pub_{\mathcal{P}}(a')=\mathrm{pub}}}$. 
\end{definition}
In other words, two histories belong to the same public state if they share the same public actions and differ only for their private actions. We call $\mathcal{S}$ the set of all public states.
It can be easily seen that, if the node $h$ of an infoset $I$ belongs to a public state $S$, then $S$ also contains all the other nodes of $I$, and the notion of public state reduces to the notion of infoset when $\mathcal{P}$ is composed of a single player. In principle, a public state can contain multiple infosets, that can be of the same player and/or of different players in $\mathcal{P}$. In the case in which $\mathcal{P}$ is a team of players, we call a public state for $\mathcal{P}$ as a \emph{team-public infoset}. With an abuse of notation, for any set of players $\mathcal{P}\subseteq N$, we denote as $\mathcal{S}_{\mathcal{P}}(h)$ the set of all infosets belonging to players in $\mathcal{P}$ that are in the same public state as node $h$.


\subsection{Public-turn-taking Games} 
We focus on a class of games, called \textbf{public-turn-taking}, in which every player knows, at every infoset she plays, the sequence of players acted from the root to that infoset. 
This property refines 1-timeability as it requires that, in addition to the length of the history, even the sequence of players is common knowledge.
In public-turn-taking games, the public states have a specific structure that is central in our results, allowing the translation of an ATG as a 2p0s game. More precisely, every public state is composed of nodes of a single player whose  histories have the same length.

\begin{definition}[Public turn-taking property]
A vEFG is public turn-taking if:
\begin{equation*}
\forall I \in \mathcal I, \forall h,h' \in I : (\iota(g))_{g \sqsubseteq h} = (\iota (g'))_{g' \sqsubseteq h'}.
\end{equation*}
\end{definition}
Interestingly, we can show that, given an extensive-form game satisfying perfect recallness and timeability, we can generate a strategically equivalent game satisfying public-turn-taking property, whose size is polynomially upper bounded in the size of the original game (proofs omitted in the main paper are in Appendix~\ref{app:proofs}).
\begin{restatable}[Transformation into a public-turn-taking game]{theorem}{sizepublicturn}
Given any timeable vEFG with players $\mathcal{N}$ and nodes $\mathcal{H}$, there is a strategically equivalent (admitting the same reduced normal form) public-turn-taking vEFG whose size is $O(\,|\mathcal N| \,\, |\mathcal H|^2\,)$.
\label{le:turnTaking}
\end{restatable}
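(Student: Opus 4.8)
The plan is to give a constructive transformation that, given a timeable vEFG, inserts ``dummy'' nodes so that the sequence of acting players becomes common knowledge along every history, while preserving the reduced normal form. First I would fix an exact timing (which exists by timeability) and, without loss of generality, rescale it so that it is $1$-timeable: this can be done at a cost polynomial in $|\mathcal H|$ by subdividing each edge whose endpoints differ in label by more than one, inserting on it a chain of trivial decision nodes with a single available action belonging to a fresh non-strategic player (or to the player of the parent). Since the original timing assigns labels bounded by $|\mathcal H|$, each edge is subdivided at most $|\mathcal H|$ times, giving $O(|\mathcal H|^2)$ nodes and establishing $1$-timeability as a clean starting point. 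Single-action nodes never affect the reduced normal form, since a plan specifying the unique action at such a node is not a genuine choice.

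Next, on the $1$-timeable game I would make the player sequence public by a ``padding'' construction: process depth levels $t = 1, 2, \dots$ in order; at each node $h$ at depth $t$ with acting player $\iota(h) = p$, I want every node at the same depth within the same public information structure to have the same acting player. Concretely, whenever two histories $h, h'$ reach depth $t$ but were produced by different player-sequences, I insert before the genuine move at each node a sequence of single-action ``waiting'' moves, one per player who has acted in the other branch but not in this one, so that after padding all histories of a given length exhibit the same multiset—indeed the same ordered sequence—of acting players. The correct bookkeeping is to assign to each depth a canonical player (e.g.\ the player that some history actually uses there) and to pad every other history up to that canonical sequence; iterating level by level, the total number of inserted waiting nodes along any root-to-leaf path is at most $|\mathcal N|$ per original level, hence at most $|\mathcal N|\,|\mathcal H|$ along one path, and at most $O(|\mathcal N|\,|\mathcal H|)$ new nodes per original node, for an overall bound of $O(|\mathcal N|\,|\mathcal H|^2)$. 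One then checks that the resulting game is a vEFG (the visibility functions extend: a waiting action is declared, say, $\mathrm{pub}$ so it is observed by everyone and thus carries no information) and that it is public-turn-taking by construction: within any infoset, and in fact within any public state, all histories now share the same ordered sequence of acting players.

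The remaining ingredient is strategic equivalence, i.e.\ that the original and transformed games admit the same reduced normal form. This is where the argument needs care: I would argue that every inserted node has a single available action, so a normal-form plan of the transformed game is forced on all new nodes and restricts bijectively to a plan of the original game on the old nodes; conversely every original plan extends uniquely. Because single-action nodes cannot be ``unreached due to a choice,'' the reduction to reduced normal form commutes with the bijection, and the payoff at each leaf is unchanged (leaves are in natural correspondence, with the same $u_p$ values), so the induced normal-form games are identical up to relabeling—hence in particular the TMEcor / Nash values and optimal strategies transfer. Perfect recall is preserved since no player ever loses the ability to distinguish histories she could distinguish before (the $Pub_p$ functions only gain trivially-observable actions).

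The main obstacle I anticipate is the second step: making the player \emph{sequence}, not merely the history \emph{length}, common knowledge requires a consistent global choice of ``canonical acting player per level'' together with a verification that the padding does not destroy perfect recall or accidentally merge infosets that should stay distinct, and that the per-path count of inserted nodes is genuinely $O(|\mathcal N|\,|\mathcal H|)$ rather than blowing up multiplicatively across levels. Getting the level-by-level induction and the counting tight enough to land at the claimed $O(|\mathcal N|\,|\mathcal H|^2)$ bound—rather than a looser polynomial—is the delicate part; the equivalence and the vEFG/perfect-recall checks are then routine.
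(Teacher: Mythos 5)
Your overall strategy---insert single-action ``dummy'' nodes so that the acting-player sequence becomes determined by the depth, then argue that single-action nodes cannot affect the reduced normal form and count the blow-up---is the same as the paper's. However, two points in your second step would fail as written. First, you declare the waiting actions $\mathrm{pub}$, ``observed by everyone and thus carrying no information.'' This is wrong in exactly the cases where the transformation is non-trivial: if two histories $h,h'$ lie in the same infoset of player $p$ but were generated by different player sequences (e.g.\ because a chance outcome hidden from $p$ determined whether a teammate moved), then after padding they carry \emph{different} publicly observable patterns of waiting moves, so under the visibility-induced definition of infosets they are split apart. This refines $p$'s information partition, enlarges the set of plans, and destroys the claimed identity of reduced normal forms. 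The paper avoids this by making each inserted action unobservable to every player except the one nominally playing it (and for that player the dummy positions coincide across an infoset by perfect recall plus exact timing, so nothing splits).

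Second, your scheduling rule---assign to each depth ``the player that some history actually uses there'' and pad the others---is not well defined, since at a given depth different histories genuinely use different players; once you pad one branch, its remaining genuine moves shift to later depths and change which players ``actually use'' those depths, so your level-by-level induction does not close. You flag this yourself as the delicate part, but it is precisely the part that needs an idea. The paper resolves it by fixing the schedule \emph{a priori}: levels are assigned to players in round-robin order over all $|\mathcal N|+1$ players (chance included), so every history trivially exhibits the same acting-player sequence at every depth, each genuine move is delayed by at most $|\mathcal N|$ dummies, and a history of length at most $|\mathcal H|$ grows to length $O((|\mathcal N|+1)|\mathcal H|)$, giving the $O(|\mathcal N|\,|\mathcal H|^2)$ bound over all $|\mathcal H|$ histories. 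With a fixed round-robin schedule your preliminary edge-subdivision step for $1$-timeability also becomes unnecessary, and the risk of compounding the $O(|\mathcal H|^2)$ intermediate game with a further per-node padding factor disappears.
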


\subsection{Completely inflated games} 
Another important class of team games for the \textit{ex ante coordination} scenario is called \textbf{completely inflated games}. In this class of games, every team member knows the exact action played by another team member at any information set. This property allows us to explicitly represent that teammates share their strategies before starting the game. 
\begin{definition}[Completely inflated vEFG \citep{Kaneko1995BehaviorSM}]
A vEFG $\mathcal G$ is completely inflated with respect to a team of players $\mathcal T$ if:
\begin{align}
Pub_\mathcal{T}(a) = \mathrm{pub}\ \forall a \in \mathcal A_p \forall p \in \mathcal T
\end{align}
\end{definition}

In the following, we focus on completely inflated vEFGs for the team $\mathcal T$. This can be ensured for a generic vEFG by modifying the function $Pub_t(a)$ in such a way that $Pub_t(a) = \mathrm{obs}\ \forall t \in \mathcal T \ \forall a \in \mathcal A_{t'} \forall t' \in \mathcal T$.

\section{Team-Public-Information Conversion Algorithm}\label{sec:conversion}

\subsection{Conversion Procedure}
We present the algorithmic procedure to convert an ATG into a 2p0s game, denoted as \emph{Team-Public-Information} (TPI) game, in which a coordinator player takes the strategic decision on behalf of the team.
The pseudo-code is provided in Algorithm~\ref{alg:PubTeam}.
\begin{definition}[Team-Public-Information game]
Given a completely inflated vEFG $\mathcal G$ that satisfies the public turn-taking property, the corresponding TPI game $\mathcal G'$\footnote{Superscript $'$ denotes the elements of the converted game.} is defined as the output of the function \algoname{ConvertGame}$(\mathcal G)$ described in Algorithm~\ref{alg:PubTeam}.
\end{definition}
\begin{algorithm}[!htb]
\caption{Team-Public-Information Conversion}\label{alg:PubTeam}
\begin{algorithmic}[1]
\small
	\Function {ConvertGame}{$\mathcal G$}
	    \Statex \hfill$\triangleright$~{\scriptsize $\mathcal G=(\mathcal N, \mathcal H, \mathcal Z, \iota, \mathcal A, A,\mathcal I, \{u_p\}_{p\in\mathcal N}, \{Pub_p\}_{p \in \mathcal{N}})$}
		\State initialize $\mathcal G'$ new game
		\Statex \hfill$\triangleright$~{\scriptsize $\mathcal G'=(\mathcal N', \mathcal H', \mathcal Z', \iota', \mathcal A', A',\mathcal I', \{u_p'\}_{p\in\mathcal N'}, \{Pub'_p\}_{p\in\mathcal N'})$}
		\State $\mathcal N' \gets \{t,o,c\}$
		\State $h'_{\emptyset} \gets$ \Call{PubTeamConv}{$h_{\emptyset}, \mathcal G, \mathcal G'$} \Comment{new game root}
		\State \Return{$\mathcal G'$}
	\EndFunction
	\item[] 
	\Function{PubTeamConv}{$h$, $\mathcal G$, $\mathcal G'$}
		\State initialize $h' \in \mathcal H'$
  		\If {$h \in \mathcal Z$} \Comment{terminal node}
  			\State $\Z' \gets \Z' \cup\{h'\} $
  			\State $u'_p(h') \gets u_p(h) \quad \forall p \in \mathcal N$
  			\State $u_t'(h')\gets \sum_{p\in\mathcal T} u_p(h)$
  			\State $u_o'(h') \gets -u_t'(h')$
  		\ElsIf {$\mathcal \iota(h) \in \{o, c\}$} \Comment{opponent or chance}
  			\State $\iota'(h') \gets \iota(h)$
  			\State $A'(h') \gets  A(h)$
  			\If {$\iota(h) = c$}
  				\State $\sigma_c'(h') = \sigma_c(h)$
  			\EndIf
  			\For {$a' \in A'(h')$}
  				\State $Pub_t'(a') \gets$ $\mathrm{obs}$ \textbf{if} $Pub_{\mathcal T}(a') = \mathrm{pub}$ \textbf{else} $\mathrm{unobs}$
  				\State $Pub_o'(a') \gets Pub_o(a')$
  				\State $h'a' \gets$ \Call{PubTeamConv}{$ha'$, $\mathcal G$, $\mathcal G'$}
  			\EndFor 
  		\Else \Comment{team member}
  			\State $\mathcal \iota'(h') = t$
  			\State $I\gets I(h)$
  			\State $A'(h') \gets \bigtimes_{J \in \mathcal  S_{\mathcal T_I}(h)} A(J)$ \label{lst:line:prescription} \Comment{prescriptions}
  			\For {$\Gamma' \in A'(h')$}
  				\State $Pub_t'(\Gamma') \gets \mathrm{seen}, Pub_o'(\Gamma') \gets \mathrm{unseen}$
  				\State $a' \gets \Gamma'[I(h)]$ \Comment{extract chosen action}
  				\State initialize $h'' \in \mathcal H'$
  				\State $A'(h'') \gets \{a'\}$
 				\State $\iota(h'') = c$
  				\State $Pub_t'(a') \gets \mathrm{seen}$
 				\State $Pub_o'(a') = Pub_o(a')$
 				\State $\sigma_c'(h'') =$ play $a'$ with probability 1
  				\State $h''a' \gets$ \Call{PubTeamConv}{$ha'$, $\mathcal G$, $\mathcal G'$}
  				\State $h'\Gamma \gets h''$ 
  			\EndFor 
  		\EndIf
  		\State \Return $h'$
	\EndFunction
\end{algorithmic}
\end{algorithm}
The algorithm recursively traverses the extensive-form game tree in a depth-first post-order fashion: for each traversed node, some corresponding nodes are instantiated in the converted game as follows. The chance, terminal, and adversary nodes are copied unaltered as the coordinator player $t$ has only access to the public information observable to the team members. 
Each team member node of the extensive form is instead mapped to a new coordinator node, in which she plays a prescription $\Gamma$ among all the combinations of possible actions for each information state $I$ belonging to the public team state. 
In other words, given a public state $S$, the coordinator issues to the players different recommendations for every possible information set belonging to $S$. 
For example, in Fig.~\ref{fig:simple_example}(a), players~1 and~2 are team members, and the decision nodes compose a unique public state, therefore there is a single information set for the coordinator player in the converted game depicted in Fig.~\ref{fig:simple_example}(b). In particular, the actions available to the coordinator are prescriptions specifying an action per information set of the extensive form (equivalently, an action per private state). See Appendix~\ref{app:info_structure} on how private information affects the construction of our conversion, and Appendix~\ref{app:figures} for a richer conversion example.

\begin{figure*}[!t]
\centering
\begin{subfigure}{.3\textwidth}
  \centering
  \includegraphics[scale=0.9]{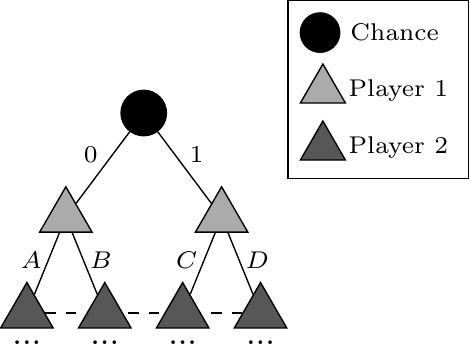}
  \caption{Example of  extensive form of an adversarial team game (only the first three levels are depicted; the symbol ``\dots''  denotes that the game continues below). Player 1 and Player 2 are in the same team and have different visibility over the chance actions.}
  \label{fig:simple_game}
\end{subfigure}%
\hspace*{0.037\textwidth}%
\begin{subfigure}{.3\textwidth}
  \centering
  \includegraphics[scale=0.9]{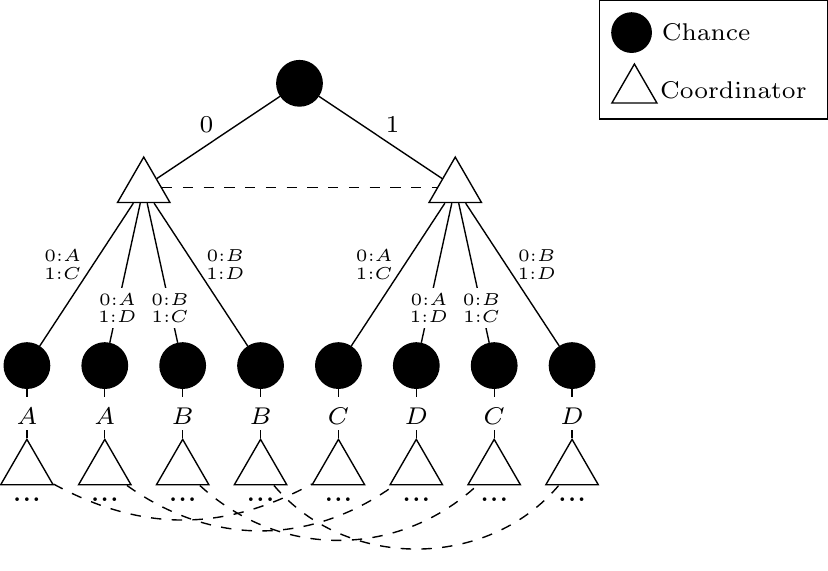}
  \caption{Team-public-information representation of the extensive-form game depicted in (a).}
  \label{fig:simple_game_converted}
\end{subfigure}
\caption{Example of game conversion: from extensive form to team-public-information representation.}
\label{fig:simple_example}
\end{figure*}

\subsection{Strategic Equivalence}

The central result of the present paper is the proof that the transformed Team Public Information game is strategically equivalent to the extensive form. In particular, we show the equivalence between a Nash Equilibrium in the converted game and the TMEcor in the extensive form. 
Before proving such a result, we introduce the following instrumental lemmas. 
We also remark that, while we make use of reduced normal-form plans, for simplicity, we refer to them as plans and pure strategies, dropping the superscript ``$^\star$''.
\begin{restatable}{lemma}{lemmaunoeq}
Given a public-turn-taking vEFG $\mathcal G$, and the corresponding TPI game $\mathcal G' = \algoname{ConvertGame}(\mathcal G)$, each joint pure strategy $\pi_{\mathcal T}$ in $\mathcal G$ can be mapped to a strategy $\pi_t$ in $\mathcal G'$, such that the traversed histories have been mapped by \algoname{PubTeamConv}. Formally, $\forall \pi_{\mathcal T}$, there is a  $\pi_t$ such that $\forall \pi_o,\pi_c$ the following holds:
\begin{equation*}
\begin{array}{c}
(\algoname{PubTeamConv}(h))_{h \text{ reached by playing } (\pi_{\mathcal T}, \pi_o, \pi_c) \text{ in } \mathcal G }\\
\equiv\\
(h')_{h'\text{ reached by playing } (\pi_t, \pi_o, \pi_c) \text{ in } \mathcal G' }. 
\end{array}
\end{equation*}
\label{th:lemma1}
\end{restatable}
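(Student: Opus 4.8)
The plan is to construct the mapping $\pi_{\mathcal T} \mapsto \pi_t$ explicitly and then verify by induction on the depth of the traversal that the two sequences of histories coincide under \algoname{PubTeamConv}. Recall that a joint pure strategy $\pi_{\mathcal T} = (\pi_p)_{p \in \mathcal T}$ specifies, for each team member $p$ and each infoset $I \in \mathcal I_p$, a single action $\pi_p(I) \in A(I)$. I would define $\pi_t$ as follows: at any coordinator node $h'$ corresponding to a team-member node $h$ with $I = I(h)$, the prescription $\pi_t(h') \in A'(h') = \bigtimes_{J \in \mathcal S_{\mathcal T_I}(h)} A(J)$ is the tuple $\Gamma'$ whose $J$-th component is $\pi_p(J)$ where $p = \iota(J)$. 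That is, the coordinator prescribes, for every private state $J$ in the current public state that still belongs to a team member who will act in the future, exactly the action that the joint pure strategy $\pi_{\mathcal T}$ would play there. One must check this is well-defined: the coordinator's infoset in $\mathcal G'$ is the public state, so $\pi_t$ must be constant across nodes $h$ in the same public state, and indeed the definition only depends on $\{\pi_p(J)\}_J$, which depends only on the public state, not on the specific $h$ within it — this uses the public-turn-taking property, which guarantees that all nodes of a given infoset (hence public state) share the same acting player and history length, so the set $\mathcal S_{\mathcal T_I}(h)$ is genuinely a function of the public state.

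Next I would set up the induction. Fix an opponent pure strategy $\pi_o$ and a chance pure strategy $\pi_c$ (here we may treat chance as pinned down, since the lemma is stated for all $\pi_c$). The claim is that the history $h$ visited at step $k$ of the play of $(\pi_{\mathcal T}, \pi_o, \pi_c)$ in $\mathcal G$ and the history $h'$ visited at the corresponding step in $\mathcal G'$ satisfy $h' = \algoname{PubTeamConv}(h)$. The base case is the root: $h'_\emptyset = \algoname{PubTeamConv}(h_\emptyset)$ by construction (line~3 of \algoname{ConvertGame}). For the inductive step, suppose the current history in $\mathcal G$ is $h$ and in $\mathcal G'$ is $h' = \algoname{PubTeamConv}(h)$. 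Split on $\iota(h)$:

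\begin{itemize}
\item If $\iota(h) \in \{o, c\}$, then \algoname{PubTeamConv} copies the node unaltered, $A'(h') = A(h)$, and the action selected by $\pi_o$ (resp. $\pi_c$) is the same in both games; the child reached is $h'a' = \algoname{PubTeamConv}(ha')$ by the recursive call, closing this case.
\item If $\iota(h) = p \in \mathcal T$, then in $\mathcal G$ play proceeds to $ha'$ where $a' = \pi_p(I(h))$. In $\mathcal G'$, the coordinator at $h'$ plays the prescription $\Gamma' = \pi_t(h')$; by our definition of $\pi_t$, the component $\Gamma'[I(h)]$ equals $\pi_p(I(h)) = a'$. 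The algorithm then inserts the deterministic chance node $h''$ with sole action $a'$, and the recursive call gives $h''a' = \algoname{PubTeamConv}(ha')$. So after these two edges in $\mathcal G'$ (the coordinator move then the forced chance move) the history is $\algoname{PubTeamConv}(ha')$, matching $\mathcal G$ up to the bookkeeping chance node, which is exactly what ``$\equiv$'' (equivalence of traversed histories under \algoname{PubTeamConv}) absorbs.
\end{itemize}

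Carrying the induction to a leaf $z \in \mathcal Z$ gives that the terminal reached in $\mathcal G'$ is $\algoname{PubTeamConv}(z)$, which is the desired conclusion.

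The main obstacle I anticipate is the well-definedness check in the first paragraph: one must be careful that $\pi_t$ as defined is a legitimate strategy in $\mathcal G'$, i.e. constant on each coordinator infoset, and that the prescription set $A'(h')$ used in the definition is exactly $\bigtimes_{J \in \mathcal S_{\mathcal T_I}(h)} A(J)$ and not something that varies within the public state. This is where the hypotheses (public-turn-taking, so infosets are "rectangular" with a single acting player and uniform depth; completely inflated, so every teammate action is public for the team and the public-state structure behaves well) do the real work, and it is worth stating explicitly that under these hypotheses $\mathcal S_{\mathcal T_I}(h)$ depends only on the public state containing $h$. A secondary, purely cosmetic subtlety is the interleaved deterministic chance nodes $h''$: the statement's ``$\equiv$'' is precisely designed to quotient these out, so no real argument is needed beyond noting that each such node has a unique action and contributes nothing to the reach probabilities — which will matter in the companion lemma on payoffs, not here.
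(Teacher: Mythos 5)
Your proposal is correct and follows essentially the same route as the paper's proof: define the coordinator's prescription at each public state to be the tuple of actions that $\pi_{\mathcal T}$ assigns to the constituent infosets, then argue by induction on the joint traversal (splitting into team-member, opponent/chance, and terminal cases) that corresponding histories are related by \algoname{PubTeamConv}. Your explicit well-definedness check that $\pi_t$ is constant on coordinator infosets is a point the paper's proof leaves implicit, but it is a refinement of the same argument rather than a different one.
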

\begin{restatable}{lemma}{lemmadueeq}
Given a public-turn-taking vEFG $\mathcal G$, and the corresponding TPI game $\mathcal G' = \algoname{ConvertGame}(\mathcal G)$, each coordinator pure strategy $\pi_t$ in $\mathcal G'$ can be mapped to a strategy $\pi_{\mathcal T}$ in $\mathcal G$, such that the traversed histories have been mapped by \algoname{PubTeamConv}. Formally, $\forall \pi_t$, there is $\pi_{\mathcal T}$ such that $\forall \pi_o,\pi_c$ the following holds:
\begin{equation*}
\begin{array}{c}
(\algoname{PubTeamConv}(h))_{h \text{ reached by playing } (\pi_{\mathcal T}, \pi_o, \pi_c) \text{ in } \mathcal G }\\
\equiv\\
(h')_{h' \text{ reached by playing } (\pi_t, \pi_o, \pi_c) \text{ in } \mathcal G'} .
\end{array}
\end{equation*}
\label{th:lemma2}
\end{restatable}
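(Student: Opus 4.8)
The plan is to prove this as the exact converse of Lemma~\ref{th:lemma1}: first give an explicit map $\pi_t \mapsto \pi_{\mathcal T}$, then verify the history correspondence by induction along the play, which is unique once all strategies are pure. Throughout I use, as in Lemma~\ref{th:lemma1}, that \algoname{PubTeamConv} copies opponent and chance nodes verbatim --- same action sets, same chance distributions, and $Pub_o'(a')=Pub_o(a')$ for every surviving action, while prescriptions are $\mathrm{unseen}$ by $o$ --- so there is a visibility-preserving bijection between the infosets of $o$ (resp.\ $c$) in $\mathcal G$ and in $\mathcal G'$; this is what makes it meaningful to feed the same $\pi_o$ and $\pi_c$ to both games.

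\emph{Construction of $\pi_{\mathcal T}$.} Fix $p\in\mathcal T$ and $I\in\mathcal I_p$. Because $\mathcal G$ is completely inflated and public turn-taking, every team action is $\mathrm{pub}$ for $\mathcal T$, so all nodes of a team-public state carry the same sequence of team actions and the same sequence of acting players; hence $I$ lies in a unique team-public state $S_I$. In $\mathcal G'$, \algoname{PubTeamConv} sends the nodes of $S_I$ to coordinator nodes, and --- after the reduced-normal-form identification announced just before the lemma, which collapses exactly the distinctions living in off-path prescription components --- the prescriptions $\pi_t$ plays at (the reachable ones among) these nodes all agree on their $I$-component. Call it $a_I$ and set $\pi_p(I):=a_I$, with $\pi_p$ arbitrary on infosets left unreachable; this yields a reduced joint pure strategy $\pi_{\mathcal T}=(\pi_p)_{p\in\mathcal T}$ depending only on $\pi_t$.

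\emph{Induction.} Let $h_0\sqsubseteq h_1\sqsubseteq\dots\sqsubseteq h_m$ be the play of $(\pi_{\mathcal T},\pi_o,\pi_c)$ in $\mathcal G$. I claim by induction on $k$ that $\algoname{PubTeamConv}(h_k)$ is the node reached in $\mathcal G'$ by $(\pi_t,\pi_o,\pi_c)$ after the matching prefix, where at every team move the $\mathcal G'$-play additionally traverses the degenerate chance node $h''$ inserted by the algorithm (this interspersing is exactly what the relation $\equiv$ in the statement absorbs). The base case is $h'_\emptyset=\algoname{PubTeamConv}(h_\emptyset)$. For the step, write $h':=\algoname{PubTeamConv}(h_k)$. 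If $h_k\in\mathcal Z$, done. If $\iota(h_k)\in\{o,c\}$, then $h'$ is a node of the same player with $A'(h')=A(h_k)$ and, when $\iota(h_k)=c$, the same distribution; under the bijection above $\pi_o$ (resp.\ $\pi_c$) picks the same action $a$ at $h'$ as at $h_k$, and $h'a=\algoname{PubTeamConv}(h_ka)=\algoname{PubTeamConv}(h_{k+1})$ by construction. If $\iota(h_k)=p\in\mathcal T$, then $\pi_{\mathcal T}$ plays $a=\pi_p(I(h_k))=\Gamma'[I(h_k)]$ with $\Gamma'=\pi_t(h')$; in $\mathcal G'$ the coordinator plays this same $\Gamma'$ at $h'$, reaching the degenerate chance node $h''=h'\Gamma'$, where chance plays $a'=\Gamma'[I(h_k)]=a$ with probability $1$ and reaches $h''a'=\algoname{PubTeamConv}(h_ka')=\algoname{PubTeamConv}(h_{k+1})$. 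Taking $k=m$ gives the stated correspondence; the payoff identities $u_t'=\sum_{p\in\mathcal T}u_p=-u_o'$ at the terminal follow directly from the algorithm but are not needed here.

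\emph{Main obstacle.} The one genuinely delicate point is the well-definedness of $\pi_{\mathcal T}$ in the construction step: that the component $\Gamma'[I]$ one reads off from $\pi_t$ is the same regardless of which node of $S_I$ (and hence which coordinator node) one uses. This is precisely where completely-inflatedness (all team actions $\mathrm{pub}$ for $\mathcal T$, forcing a single team-action sequence per public state) and public-turn-taking (forcing a single acting-player sequence, so that a public state is a bona fide coordinator infoset modulo off-path prescription choices) enter, together with the agreement to work with reduced normal-form plans. Everything else is bookkeeping --- in particular, keeping the degenerate chance nodes aligned across the two plays.
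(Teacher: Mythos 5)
Your proposal is correct and follows essentially the same route as the paper's proof: a parallel recursive traversal of $\mathcal G$ and $\mathcal G'$ that, at each team node, extracts the action $a=\Gamma[I(h)]$ from the prescription $\Gamma=\pi_t[I'(h')]$ and sets $\pi_{\mathcal T}[I(h)]=a$, while opponent, chance, and terminal nodes are handled by the verbatim-copy property of \algoname{PubTeamConv}. The only difference is presentational: you define $\pi_{\mathcal T}$ upfront from the public-state structure and explicitly argue its well-definedness (that $\Gamma'[I]$ does not depend on which coordinator node of the public state is used, modulo the reduced-normal-form identification), a point the paper's proof leaves implicit by constructing $\pi_{\mathcal T}$ on the fly during the traversal.
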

%
We can now define the following functions to map strategies from the extensive form game to the converted game.
\begin{definition}[Mapping functions]
We define:
\begin{itemize}
\item $\rho: \Pi_{\mathcal T} \to \Pi_t$ maps each $\pi_{\mathcal T}$ to the $\pi_t$ specified by the procedure described in the proof of Lemma~\ref{th:lemma1};
\item $\sigma: \Pi_t \to \Pi_{\mathcal T}$ maps each $\pi_t$ to the $ \pi_{\mathcal T}$ specified by the procedure described in the proof of Lemma~\ref{th:lemma2}.
\end{itemize}

Those two functions can also be extended to mixed strategies, by converting each pure plan and summing the probability masses of the converted plans. Formally, we have:
\begin{align*}
\forall \mu_{\mathcal T} \in \Delta^{\Pi_{\mathcal T}} : \rho(\mu_{\mathcal T})[\pi_t] = \sum_{\pi_{\mathcal T} : \rho(\pi_{\mathcal T})=\pi_t} \mu_{\mathcal T}(\pi_{\mathcal T}),\\
\forall \mu_t \in \Delta^{\Pi_t} : \sigma(\mu_t)[\pi_{\mathcal T}] = \sum_{\pi_t : \sigma(\pi_t)=\pi_{\mathcal T}} \mu_t(\pi_t).
\end{align*}
\end{definition}
We can now state the payoff-equivalence between a game $\mathcal{G}$ and the corresponding TPI game $\mathcal G'$ as follows. 
\begin{restatable}{theorem}{theoremequivalence}
A public-turn-taking vEFG $\mathcal G$ and its TPI game $\mathcal G' = \algoname{ConvertGame}(\mathcal G)$ are payoff-equivalent, i.e. 
\begin{align*}
\forall \pi_{\mathcal T} \; \forall \pi_o,\pi_c :  u_{\mathcal T}(\pi_{\mathcal T}, \pi_o, \pi_c) = u_t(\rho(\pi_{\mathcal T}), \pi_o, \pi_c), \\
\forall \pi_t \; \forall \pi_o,\pi_c :  u_{\mathcal T}(\sigma(\pi_t), \pi_o, \pi_c) = u_t(\pi_t, \pi_o, \pi_c).
\end{align*}
\end{restatable}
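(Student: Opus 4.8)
The plan is to derive the payoff equivalence as a direct corollary of Lemmas~\ref{th:lemma1} and~\ref{th:lemma2}. The key observation is that the utility of a pure strategy profile in either game is determined entirely by which terminal node is reached, since play is deterministic once all players (including chance, which we treat as playing a fixed pure plan $\pi_c$) commit to pure plans. So the core of the argument is: if two profiles reach ``the same'' terminal node (in the sense that one is the \algoname{PubTeamConv}-image of the other), then they yield the same team payoff.

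First I would fix an arbitrary $\pi_{\mathcal T}$ and let $\pi_t = \rho(\pi_{\mathcal T})$. For arbitrary $\pi_o, \pi_c$, Lemma~\ref{th:lemma1} tells us that the sequence of histories traversed by $(\pi_{\mathcal T}, \pi_o, \pi_c)$ in $\mathcal G$ maps, node-by-node via \algoname{PubTeamConv}, onto the sequence of histories traversed by $(\pi_t, \pi_o, \pi_c)$ in $\mathcal G'$. In particular, the \emph{terminal} node $z$ reached in $\mathcal G$ corresponds to the terminal node $z' = \algoname{PubTeamConv}(z)$ reached in $\mathcal G'$ --- here I would note that \algoname{PubTeamConv} maps terminals to terminals, which is immediate from the algorithm (the $h \in \mathcal Z$ branch). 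Then I would invoke the assignment $u_t'(h') \gets \sum_{p \in \mathcal T} u_p(h)$ performed at line for terminal nodes, which gives $u_t(z') = \sum_{p \in \mathcal T} u_p(z) = u_{\mathcal T}(z)$, using that all team members share utility $u_{\mathcal T}$. Since $u_{\mathcal T}(\pi_{\mathcal T}, \pi_o, \pi_c) = u_{\mathcal T}(z)$ and $u_t(\pi_t, \pi_o, \pi_c) = u_t(z')$, the first equality follows. The second equality is entirely symmetric, using Lemma~\ref{th:lemma2} and $\sigma$ in place of Lemma~\ref{th:lemma1} and $\rho$.

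One technical point I would be careful about is the scaling of the team utility: the TPI game defines $u_t' = \sum_{p\in\mathcal T} u_p$ rather than $u_p$ itself. For a genuine team (where all $u_p = u_{\mathcal T}$), this is just $|\mathcal T|\, u_{\mathcal T}$, so strictly speaking the statement as written requires that we either interpret $u_{\mathcal T}$ on the left as this summed quantity, or absorb a constant factor; I would state explicitly which convention is in force (the sum), so that the displayed identities are literally correct, and remark that a constant positive rescaling of a zero-sum game's payoffs preserves all equilibria, so nothing is lost for the downstream results. I would also remark that chance is handled uniformly: in $\mathcal G'$ chance nodes are copied verbatim with $\sigma_c'(h') = \sigma_c(h)$, and the extra ``singleton'' chance nodes inserted after each coordinator move (with $A'(h'') = \{a'\}$) are forced moves that do not alter the reached terminal, so treating $\pi_c$ as ranging over $\mathcal G$'s chance plans and reusing it in $\mathcal G'$ is consistent.

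The main obstacle is essentially bookkeeping rather than conceptual: I need to make sure the correspondence ``terminal reached in $\mathcal G$ $\leftrightarrow$ terminal reached in $\mathcal G'$'' is rigorously extracted from the lemmas, which are stated as an equivalence of \emph{sequences} of traversed histories under the map \algoname{PubTeamConv}. Concretely I must argue that the last element of each sequence is a terminal, that these last elements correspond to each other under the map, and that \algoname{PubTeamConv} restricted to terminals is the identity on payoffs up to the $\sum_{p\in\mathcal T}$ convention. None of this is deep, but it is the only place where one actually has to touch the structure of Algorithm~\ref{alg:PubTeam}; everything else is a one-line consequence.
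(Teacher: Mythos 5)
Your proposal is correct and follows the same route as the paper, which simply declares the theorem a direct consequence of Lemmas~\ref{th:lemma1} and~\ref{th:lemma2}; you merely spell out the bookkeeping (terminal nodes map to terminal nodes, utilities are assigned there by the algorithm) that the paper leaves implicit. Your observation about the $\sum_{p\in\mathcal T} u_p = |\mathcal T|\,u_{\mathcal T}$ scaling is a legitimate technical point the paper glosses over, and your resolution (fix the convention or absorb the harmless positive constant) is the right one.
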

The correspondence between the strategies of the two representations is used to derive the main result of this work that shows the equivalence between a NE of the converted 2p0s game and a TMEcor of the original ATG.
\begin{restatable}{theorem}{equivalenceNE}
Given a public-turn-taking vEFG $\mathcal G$, and the corresponding TPI $\mathcal G' = \algoname{ConvertGame}(\mathcal G)$, a Nash Equilibrium $\mu_t^*$ in $\mathcal G'$ is realization equivalent to a TMEcor $\mu_\team^* = \sigma(\mu_t^*)$ in $\mathcal G$.
\label{th:main}
\end{restatable}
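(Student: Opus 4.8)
The plan is to leverage the payoff-equivalence theorem stated just above, together with the mapping functions $\rho$ and $\sigma$, to transfer the minimax/equilibrium structure between $\mathcal G'$ and $\mathcal G$. The key observation is that a NE in the 2p0s game $\mathcal G'$ is exactly a maxmin (equivalently minmax, by the minimax theorem for zero-sum games) strategy pair, and a TMEcor in $\mathcal G$ is by definition the solution of the maxmin problem~\eqref{eq:tmecor} over joint normal-form plans of the team against the opponent. So the argument reduces to showing that the maxmin value and optimizers of these two optimization problems coincide under $\rho$ and $\sigma$.

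First I would establish that $\rho$ and $\sigma$ induce a value-preserving correspondence between the team's mixed strategy space $\Delta^{\Pi_{\mathcal T}}$ and the coordinator's mixed strategy space $\Delta^{\Pi_t}$. Concretely: by the payoff-equivalence theorem, for every $\pi_{\mathcal T}$ and every opponent/chance plan we have $u_{\mathcal T}(\pi_{\mathcal T},\pi_o,\pi_c) = u_t(\rho(\pi_{\mathcal T}),\pi_o,\pi_c)$, and symmetrically for $\sigma$; extending linearly to mixed strategies (using the definition of $\rho,\sigma$ on mixtures) gives $u_{\mathcal T}(\mu_{\mathcal T},\mu_o,\mu_c) = u_t(\rho(\mu_{\mathcal T}),\mu_o,\mu_c)$ and $u_{\mathcal T}(\sigma(\mu_t),\mu_o,\mu_c) = u_t(\mu_t,\mu_o,\mu_c)$ for all mixed opponent strategies. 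Second, I would note that $\rho$ is surjective onto $\Delta^{\Pi_t}$ (every coordinator plan is hit, by Lemma~\ref{th:lemma2} / the construction of $\sigma$) and likewise $\sigma$ covers $\Delta^{\Pi_{\mathcal T}}$ up to realization-equivalence, so optimizing over one space is the same as optimizing over the other. Combining these, $\max_{\mu_{\mathcal T}}\min_{\mu_o} u_{\mathcal T} = \max_{\mu_t}\min_{\mu_o} u_t$, and a maximizer $\mu_t^*$ of the right side pulls back via $\sigma$ to a maximizer $\sigma(\mu_t^*)$ of the left side, i.e.\ a TMEcor; the opponent's equilibrium strategy is shared between the two games since opponent and chance nodes are copied unaltered by \algoname{PubTeamConv}.

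Then I would convert this value/optimizer statement into the ``realization equivalent'' claim: a NE $\mu_t^*$ of $\mathcal G'$ is a maxmin strategy for $t$ (zero-sum $\Rightarrow$ NE $=$ maxmin pair), hence $\sigma(\mu_t^*)$ attains the maxmin value of~\eqref{eq:tmecor} against a best-responding opponent, hence is a TMEcor; and realization equivalence in the original game holds because, by Lemmas~\ref{th:lemma1}–\ref{th:lemma2}, the distribution over terminal histories induced by $(\mu_t^*,\pi_o,\pi_c)$ in $\mathcal G'$ matches, history-by-history under \algoname{PubTeamConv}, the one induced by $(\sigma(\mu_t^*),\pi_o,\pi_c)$ in $\mathcal G$. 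One direction-checking subtlety worth spelling out is that a unilateral deviation of the coordinator in $\mathcal G'$ corresponds, under $\sigma$, to a feasible correlated (ex ante) deviation of the team in $\mathcal G$ and conversely — this is what makes the NE condition in $\mathcal G'$ equivalent to the TMEcor optimality condition rather than to some weaker (e.g.\ behavioral-strategy) notion.

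The main obstacle I expect is being careful about the quantifier structure and about mixed versus pure deviations. The payoff-equivalence theorem is stated only for pure plans $\pi_{\mathcal T},\pi_t$ and pure $\pi_o,\pi_c$; lifting to a genuine maxmin equality requires (i) linearity of expected utility in each player's mixed strategy, (ii) the fact that $\rho$ and $\sigma$ are, on mixtures, the ``pushforward'' maps defined in the paper and therefore also bilinear-compatible, and (iii) that restricting the opponent's maxmin response to pure best responses loses nothing (standard, since a best response can be taken pure). The genuinely delicate point is ensuring that the image of $\sigma$ is rich enough — i.e.\ that $\sigma$ does not collapse the coordinator's strategy space in a way that would make the team ``weaker'' in $\mathcal G$ than the coordinator in $\mathcal G'$; this is exactly what Lemma~\ref{th:lemma1} guarantees in the other direction ($\rho$ maps every team joint plan to a coordinator plan with identical outcomes), so the two inequalities $\mathrm{val}(\mathcal G) \le \mathrm{val}(\mathcal G')$ and $\mathrm{val}(\mathcal G') \le \mathrm{val}(\mathcal G)$ both go through and pin down equality together with the optimizer correspondence.
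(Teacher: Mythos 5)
Your proposal is correct and follows essentially the same route as the paper: lift the pure-plan payoff equivalence to mixed strategies via the pushforward maps $\rho$ and $\sigma$, note the opponent/chance structure is unchanged, and conclude that the coordinator's maxmin problem and the team's TMEcor problem have the same value and corresponding optimizers. The paper packages this as a short contradiction argument (a hypothetically better team strategy $\bar\mu_{\mathcal T}$ would push forward via $\rho$ to a coordinator strategy beating $\mu_t^*$), which is exactly the two-inequality argument you spell out more explicitly.
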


\subsection{Games with Compact TPI}

The procedure to convert an extensive-form game into the equivalent TPI game exploits the information structure of the team to prescribe to the team players an action for every possible private state. In general, this makes the size of the TPI to grow exponentially with the number of possible private states belonging to a public state.
However, we can find a class of games in which their information structure allows the generation of a TPI game with a size upper bounded by a polynomial in the size of the extensive form:
\begin{definition}[Games with common external information]
A vEFG $\mathcal G$ has \textit{common external information} for a set of players $\mathcal T \subseteq \mathcal N$ if all the actions performed by the other players (chance included) have the same visibility for all players in $\mathcal T$, formally, $\forall p\in\mathcal N\setminus \mathcal T,\ \forall a\in \mathcal{A}_p$: 
\begin{align*}
    Pub_{\mathcal T}(a) \neq \mathrm{priv}.
\end{align*}
\end{definition}
\begin{restatable}{theorem}{sizecommonext}
Given a public-turn-taking vEFG $\mathcal G$ with common external information for the team $\mathcal T$, the tree of corresponding TPI game $\mathcal G'$ has a number of nodes linear in the nodes of $\mathcal G$.
\label{th:common_ext}
\end{restatable}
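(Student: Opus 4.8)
The plan is to isolate the one operation in \algoname{PubTeamConv} that can blow up the tree --- the construction of the prescription set $A'(h') = \bigtimes_{J \in \mathcal{S}_{\mathcal{T}_I}(h)} A(J)$ at each team node $h$ (terminal, chance, and opponent nodes are copied one-to-one and cause no growth) --- and to show that, under common external information, this product degenerates to $A(I(h))$. Once that is in hand, \algoname{PubTeamConv} becomes a traversal of $\mathcal{G}$ that visits each node once and creates a bounded number of nodes at each, and a direct sum over $\mathcal{H}$ finishes the proof.

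The first and main step is the lemma that for every team node $h$, writing $I := I(h)$, we have $\mathcal{S}_{\mathcal{T}_I}(h) = \{I\}$. I would prove this in three moves. (i) Common external information for $\mathcal{T}$ plus complete inflation implies common external information for the sub-team $\mathcal{T}_I \subseteq \mathcal{T}$: by complete inflation every team action is $\mathrm{pub}$ for all of $\mathcal{T}$, hence for $\mathcal{T}_I$, so in particular actions of team members outside $\mathcal{T}_I$ are not $\mathrm{priv}$ for $\mathcal{T}_I$; and actions of players in $\mathcal{N}\setminus\mathcal{T}$ are not $\mathrm{priv}$ for $\mathcal{T}$, hence --- being seen either by everyone or by no one in $\mathcal{T}$ --- also not $\mathrm{priv}$ for the smaller set $\mathcal{T}_I$. (ii) Therefore every $p \in \mathcal{T}_I$ observes exactly the actions that are $\mathrm{pub}$ for $\mathcal{T}_I$ (all team actions by inflation; of the remaining actions, exactly the $\mathrm{pub}$ ones, the others being $\mathrm{hidden}$ for $\mathcal{T}_I$ hence unseen by $p$). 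By the $Pub$-induced definition of infosets, this means that on the nodes of $\iota(h)$ the infoset partition and the $\mathcal{T}_I$-public-state partition coincide, so the $\mathcal{T}_I$-public state of $h$, restricted to $\iota(h)$'s nodes, is exactly $I$. (iii) By public-turn-taking the $\mathcal{T}$-public states consist of nodes of a single player, and since $\mathcal{T}_I\subseteq\mathcal{T}$ the $\mathcal{T}_I$-public states refine these and are a fortiori single-player; hence the $\mathcal{T}_I$-public state of $h$ contains only $\iota(h)$'s nodes, equals $I$, and contains no infoset of any other player in $\mathcal{T}_I$. Thus $\mathcal{S}_{\mathcal{T}_I}(h) = \{I\}$.

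The second step is the node count. With $\mathcal{S}_{\mathcal{T}_I}(h) = \{I\}$, line~\ref{lst:line:prescription} yields $A'(h') = A(I) = A(h)$, and the extraction map $\Gamma' \mapsto \Gamma'[I(h)]$ is the identity bijection $A'(h') \to A(h)$, so \algoname{PubTeamConv} recurses exactly once on each child $ha'$ and never duplicates a subtree. Reading off Algorithm~\ref{alg:PubTeam}: a terminal, chance, or opponent node of $\mathcal{G}$ contributes one node of $\mathcal{G}'$, while a team node $h$ contributes one coordinator node plus one intermediate chance node per prescription, i.e. $1 + |A(h)|$ nodes. Summing over the once-visited nodes of $\mathcal{G}$,
\[
|\mathcal{H}'| \;\le\; \sum_{h\in\mathcal{H}}\bigl(1+|A(h)|\bigr) \;=\; |\mathcal{H}| + (|\mathcal{H}|-1) \;=\; 2|\mathcal{H}|-1,
\]
since $\sum_{h\in\mathcal{H}}|A(h)|$ counts the edges of the tree $\mathcal{G}$. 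Hence $|\mathcal{H}'| = O(|\mathcal{H}|)$.

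I expect the whole difficulty to sit in the lemma of the first step --- specifically, passing from common external information for $\mathcal{T}$ to the sub-team $\mathcal{T}_I$ that the algorithm actually uses at $h$, and invoking public-turn-taking to exclude nodes of other players from the public state so that it truly collapses to the single infoset $I$. After that, the counting is routine bookkeeping over the pseudocode.
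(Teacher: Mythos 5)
Your proof is correct and follows essentially the same route as the paper's: both reduce the theorem to showing that, at every team node, the relevant public state contains exactly one infoset, so the prescription set on line~\ref{lst:line:prescription} degenerates to $A(h)$ and the conversion adds only a constant factor. If anything, your version is more careful than the paper's --- you prove the key collapse directly (handling the passage from $\mathcal{T}$ to the sub-team $\mathcal{T}_I$ and using public-turn-taking to rule out other players' infosets), where the paper argues by contradiction and appeals tersely to the hypotheses.
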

Intuitively, Theorem~\ref{th:common_ext} states that if the game has common external information for the team, then it is possible to find the TMEcor in polynomial time. 
This result matches what was previously known in literature. When common external information is satisfied, one can, indeed, resort to \citeauthor{Kaneko1995BehaviorSM}~(\citeyear{Kaneko1995BehaviorSM}) to find a polynomial-time algorithm to find an equilibrium. This is the case, \emph{e.g.}, of Goofspiel game~\citep{ross1971goofspiel} which admits a compact TPI.

%

%
%



\subsection{TPI Expressivity and Abstractions}\label{sec:abstractions}

Abstractions demonstrated to be a successful tool to tackle real-world 2p0s game~\cite{Sandholm2015AbstractionFS}. Generally, these are obtained by merging different infosets of the same player (\emph{state} abstractions) and/or different actions of the same player at the same infoset (\emph{action} abstractions). However, despite their importance, the use of abstractions in ATGs has remained unexplored so far. 
By defining the team's strategies as behavioral, the Team-Public-Information representation provides a suitable and direct tool for designing abstractions for ATGs, while we can show that the extensive-form is not sufficiently expressive. 

\begin{restatable}{proposition}{propos} Any action or state abstraction that, once applied to an extensive-form game $\mathcal{G}$, returns a perfect-recall timeable game can be mapped specularly in the team-public-information representation $\mathcal{G}' = \algoname{ConvertGame} (\mathcal{G})$. The reverse is not true.
\label{th:prop_abs}
\end{restatable}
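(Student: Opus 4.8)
The plan is to prove the two halves separately: the forward implication by an induction that mirrors the recursion of \algoname{PubTeamConv}, with a case split on which player owns the abstracted object, and the negative part by exhibiting a coordinator-level abstraction that provably has no extensive-form preimage.

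For the forward direction, fix an abstraction $\alpha$ turning $\mathcal{G}$ into a perfect-recall timeable game $\mathcal{G}_\alpha$; we may assume, after re-inflating and (if needed) invoking Theorem~\ref{le:turnTaking}, that $\mathcal{G}_\alpha$ is still completely inflated and public-turn-taking, so that $\algoname{ConvertGame}(\mathcal{G}_\alpha)$ is defined. I would construct an abstraction $\alpha'$ on $\mathcal{G}'=\algoname{ConvertGame}(\mathcal{G})$ and prove $\alpha'(\mathcal{G}') = \algoname{ConvertGame}(\mathcal{G}_\alpha)$ by structural induction following \algoname{PubTeamConv}. If $\alpha$ merges actions or infosets of chance or of the opponent $o$, those nodes are copied verbatim by the conversion (the $\iota(h)\in\{o,c\}$ branch of the algorithm), so the identical operation on the corresponding nodes of $\mathcal{G}'$ is the desired $\alpha'$, and the inductive hypothesis on the subtrees closes the case. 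If $\alpha$ is an action abstraction acting inside $A(I)$ for a team infoset $I$, note that $A(I)$ appears in $\mathcal{G}'$ only as the $I$-indexed factor of the prescription sets $A'(h')$ at the coordinator nodes of the public state containing $I$, and as the singleton action sets of the chance nodes the algorithm inserts right after a prescription; applying the same merge to that factor (and to those singletons) is an action abstraction $\alpha'$ on the coordinator, and induction on the subtrees rooted at the children $ha'$ finishes.

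The substantive case, which I expect to be the main obstacle, is a state abstraction merging two team infosets $I_1,I_2\in\mathcal{I}_p$. The key first step is to show that $I_1,I_2$ must lie in a common public state: since $\mathcal{G}$ is completely inflated, player $p$ observes every action that is public for any player set containing $p$ (in particular for $\mathcal{T}_I$), so if $I_1$ and $I_2$ can be merged into a single infoset of a perfect-recall vEFG -- i.e. $p$ cannot distinguish their histories -- then those histories agree on all such public actions and therefore share a public state $S$. Hence $I_1$ and $I_2$ show up as two distinct factors of the coordinator's prescription set at every node of $S$, and merging them in $\mathcal{G}$ corresponds in $\mathcal{G}'$ to restricting each such prescription set to the diagonal $\{\Gamma':\Gamma'[I_1]=\Gamma'[I_2]\}$ and then identifying the two factors -- once more an action abstraction on the coordinator; the relabelling of the inserted chance nodes is consistent exactly because the restriction enforces $\Gamma'[I_1]=\Gamma'[I_2]$. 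The delicate bookkeeping is that merging $I_1,I_2$ may propagate to descendant infosets and may change $\mathcal{T}_{I_1},\mathcal{T}_{I_2}$ (hence the shape of downstream prescription sets) in $\mathcal{G}_\alpha$; I would handle this by strengthening the induction hypothesis to ``the two conversions agree subtree by subtree up to the quotient induced by the propagated merges,'' keeping histories in correspondence via Lemmas~\ref{th:lemma1} and~\ref{th:lemma2}.

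For the reverse part it suffices to exhibit one abstraction on $\mathcal{G}'$ with no extensive-form preimage. Take $\mathcal{G}$ with two team infosets $I$ and $J$ -- of the same player or of two teammates -- inside one public state, each with two available actions, $\{a_1,a_2\}$ at $I$ and $\{b_1,b_2\}$ at $J$, so that the coordinator has a single decision node whose prescriptions form the product $\{a_1,a_2\}\times\{b_1,b_2\}$. The action abstraction on $\mathcal{G}'$ that keeps only the two prescriptions $(a_1,b_1)$ and $(a_2,b_2)$ correlates the choices made at $I$ and at $J$; but no extensive-form abstraction can realize it, because an action abstraction on $\mathcal{G}$ edits the action set of a single infoset in isolation and a state abstraction merges infosets but never forbids a combination of actions across two distinct infosets. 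Therefore the abstractions expressible on the team-public-information representation strictly contain those expressible on the extensive form, establishing the claimed asymmetry.
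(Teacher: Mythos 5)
Your proof is correct and follows the same overall route as the paper's, but it supplies substantially more detail than the paper does. For the negative direction your counterexample is essentially the paper's own: the paper collapses the prescription ``0:A,\,1:C'' onto ``0:A,\,1:D'' in the game of Fig.~\ref{fig:simple_example} and observes that this forbids a combination of actions across two infosets, which no extensive-form state or action abstraction can express --- exactly your diagonal restriction of $\{a_1,a_2\}\times\{b_1,b_2\}$. For the forward direction the paper offers only the one-line remark that any aggregation of states or actions in the extensive form yields a game that can again be fed to Algorithm~\ref{alg:PubTeam}; you instead construct the mirrored abstraction $\alpha'$ on $\mathcal{G}'$ and verify commutation with \algoname{PubTeamConv} by structural induction, and your observation that merging two team infosets of a completely inflated game forces them into a common public state (so the merge becomes a diagonal restriction plus identification of prescription factors) is a genuine addition not present in the paper. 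The only caveat worth noting is that you must (as you do) re-establish complete inflation and public turn-taking on $\mathcal{G}_\alpha$ before converting, and after invoking Theorem~\ref{le:turnTaking} the claimed identity $\alpha'(\mathcal{G}')=\algoname{ConvertGame}(\mathcal{G}_\alpha)$ should be read up to strategic equivalence rather than literal equality of trees; the paper glosses over this point as well.
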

It can be observed that the properties required by the above proposition are satisfied by most of the abstractions, \emph{e.g.}, by~\citet{10.1145/1284320.1284324} and~\citet{DBLP:conf/aaai/GilpinSS07}.

\subsection{TPI and Subgame Solving}

Subgame solving \cite{Moravck2016RefiningSI, Brown2017SafeAN} demonstrated to be a central technique to face huge imperfect-information 2p0s games, such as, \emph{e.g.}, poker games \cite{Moravck2017DeepStackEA, BrownS17}. More precisely, subgame solving takes as input a strategy (usually called \emph{blueprint}) computed with a coarse abstraction of the game and refines it in the neighborhoods of the currently reached information set while playing (intuitively, subgame solving algorithms perform a sequence of local reoptimizations). The basic idea is to extract a portion (called \emph{subgame}) of the original game and generate on-the-fly an auxiliary game to solve just in time. The solving algorithm is initialized with the blueprint mapped to the auxiliary game and then it refines such a strategy. In particular, in every information set a player moves, the strategy refinement algorithm is performed. A notable example of subgame solving technique is \emph{depth-limited subgame solving} \cite{Brown2018DepthLimitedSF}.
In this algorithm, the auxiliary game is built starting from the subgame rooted at the public state corresponding to the infoset in which the player is playing. 
The subgame is truncated at a given depth, after which the players are assumed to play according to the blueprint. 
As widely shown in real-world applications~\cite{BrownS17}, depth-limited subgame solving can dramatically reduce the players' exploitability.

Since our TPI conversion generates a 2p0s game preserving the public structure of the original game, subgame solving techniques, including, \emph{e.g.}, depth-limited solving, can be applied directly. The only caveat concerns the size of the auxiliary game, which is exponentially larger than the size of the subgame in the extensive form. Developing efficient subgame-solving techniques for the TPI game is an interesting line of research and is left as future work. 
\section{Experimental Evaluation}\label{sec:experiments}

\begin{table*}[h!]
\centering
\caption{Size of the game trees returned by the different favours of our conversion (basic, pruned, folded, imperfect-recall abstraction of folded, and lossy imperfect-recall abstraction of folded) and by the tree decomposition by~\citet{zhang2022team}; size of the reduced normal form. We use the following notation: $mn$K$r$ is Kuhn poker with a team of $m$ players facing a team of $n$ player and $r$ ranks; $mn$L$brc$ is Leduc poker with a team of $m$ players facing a team of $n$ players, a maximum number $b$ of bets allowed in each betting round, a number of ranks $r$, and a number of indistinguishable suits $c$). Game values are provided both for the exact case (in white) and when using our abstraction (in red). The empty cells are due to instances with more than $2\cdot10^9$ nodes or out-of-memory.}
\resizebox{\columnwidth*2}{!}{%
\begin{tblr}{
    colspec = {rlllrrrrrrrrrrrr},
    rowspec={QQ|[1.5pt]QQ|QQQQQ|QQQQQ|QQQQQ|QQQQQ|QQQQ|[1.5pt]Q|[1.5pt]QQQQQQ|[1.5pt]},
}
&& & & \multicolumn{12}{c}{game instances} \\
                                                  &&          &          &  21K3  &  21K4  &  21K5  &  21K6  &  21K8  &  31K5  &  21L133  &  21L143  &  21L153  &  21L223  &  21L523  &  31L133 \\ 
\SetRow{brown9} \multirow{2}{*}{\rotatebox[origin=c]{90}{normal}}&\multirow{2}{*}{\rotatebox[origin=c]{90}{form}}& plans    & team         &  $\sim10^6$& $\sim10^8$ & $\sim10^{9}$ & $\sim10^{11}$  & $\sim10^{15}$       &    $\sim10^{20}$    & $\sim10^{70}$ &    $\sim10^{126}$      & $\sim10^{197}$         &$\sim10^{252}$& $\sim10^{3200}$         &    $\sim10^{283}$     \\
\SetRow{brown9}
& & plans    &adversary & $\sim10^{3}$ & $\sim10^{4}$ & $\sim10^{6}$ & $\sim10^{7}$       & $\sim10^{9}$       &    $\sim10^{10}$    &$\sim10^{54}$&    $\sim10^{96}$      &    $\sim10^{150}$      &$\sim10^{134}$&    $\sim10^{3900}$      & $\sim10^{152}$        \\
\multirow{5}{*}{\rotatebox[origin=c]{90}{basic}}&\multirow{5}{*}{\rotatebox[origin=c]{90}{}}& nodes    &          &7336  & 200,681 & 3,714,326 &        &        &        &35,140,264&          &          &6,140,623&          &         \\
\SetRow{gray9}                                  &  & infosets & team & 888  & 10,661 & 117,938 &        &        &      &  1,625,647&          &          &427,984&          &         \\
\SetRow{gray9}                                  &  & infosets & adversary & 12  & 16   & 20      &    &    &        &228&          &          &          630&          &         \\
&& actions  & team & 2,101 & 24,641 & 265,517 &        &       & &4,135,497&          &          &          1,287,852&          &         \\
&& actions  &  adversary & 25 & 33 & 41 &        &  &      &457&          &          &          1,443&          &         \\ 
\multirow{5}{*}{\rotatebox[origin=c]{90}{pruned}}& \multirow{5}{*}{\rotatebox[origin=c]{90}{}}& nodes    &          &   4,360     & 95,225 &  324,766 &15,007,117&        &        &35,140,264&          &          &724,009&          &         \\
\SetRow{gray9}                                   & & infosets & team &    495    &4,505&  35,943 &  267,229 & &        &101,389&          &          &45,440&          &         \\
\SetRow{gray9}                                   & & infosets & adversary & 12       &  16      &  20      &  24  &        &        &228&          &          &630&          &         \\
                                                 & & actions  & team &      1,087  & 9,849  & 77,947 & 574,709 &        &        &339,243&          &          &127,352&          &         \\
                                                &  & actions  & adversary &  25      & 33       & 41       & 49       &        &        &           457         &          &   & 1,443      &          &         \\ 
\multirow{5}{*}{\rotatebox[origin=c]{90}{folded}}&\multirow{5}{*}{\rotatebox[origin=c]{90}{}}& nodes    &          &4,108        & 66,349 & 740,406 & 7,002,763 & 488,157,721& 202,660,366 & 1,691,158 & 61,983,093 & 1,973,610,366 & 538,111 & 222,239,487 & 277,714,570\\
\SetRow{gray9}                                  &  & infosets & team & 495 & 4,505 & 35,943 & 267,229 & 13,194,833 & 11,783,620 & 96,115 & 2,625,209 & 67,400,747 & 44,252 & 18,308,851 & 17,403,080 \\
\SetRow{gray9}                                  &  & infosets & adversary & 12 & 16 & 20 & 24 & 32       & 40 & 228 & 400 & 620 & 630 & 49,584 & 816 \\
                                                &  & actions  & team & 1,086 & 9,849       & 77,947 & 574,709 & 27,978,929 &25,689,691& 208,136 & 5,736,593 & 147,671,105 &   106,963       &45,969,475&    37,743,473     \\
                                                &  & actions  & adversary & 24       & 32       & 41       & 49       & 65        &81& 457 & 801 &  1,241        & 1,443         & 123,153         &1,633\\ 
\multirow{5}{*}{\rotatebox[origin=c]{90}{imperfect-recall}}&\multirow{5}{*}{\rotatebox[origin=c]{90}{abstraction of folded}}    & nodes    &          &4,108        & 66,349 & 740,406 & 7,002,763 & 488,157,721  &202,660,366&1,691,158& 61,983,093 &1,973,610,366 & 538,111 &222,239,487&277,714,570\\
\SetRow{gray9}                                  &  & infosets & team                &   81     &    321    &1,213        & 4,585       & 68,321 &108,480& 23,071 &          4,600& 105,742 &   4,522       &361,969&184,394\\
\SetRow{gray9}                                  &  & infosets & adversary & 12         & 16 & 20 & 24 &  32      &     40     &228&           400 &   620       & 630 & 49,584 & 816\\
                                                &  & actions  & team & 253 & 1,433 & 8,237 & 48,341 & 1,710,449 & 886,591       & 13,659 & 97,577 &   682,095       & 13,646 & 1,261,733         &568,211\\
                              &  & actions  & adversary &  25      & 32       & 41       & 49       & 65 &81& 800 &         457 &     1,241               & 1,443 & 123,153 &  1,633       \\ 
 \SetRow{azure9}\multirow{4}{*}{\rotatebox[origin=c]{90}{tree}} & \multirow{4}{*}{\rotatebox[origin=c]{90}{decomposition}} & sequences    & team & 91 & 177 & not available & 433 & 801 & 2,611 & 2,725 & 6,377 & 12,361 & 5,765 & 492,605 & 42,361 \\
 \SetRow{azure9}                             &  & sequences    & adversary & 25 & 33 & not available & 49 & 65 & 81 & 457 & 801 & 1,241 & 1,433 & 123,143 & 1,633 \\
  \SetRow{azure8}                                &   & loc. feas. sets & team & 351 & 1,749 & not available & 52,669 & 1,777,061 & 974,470 & 17,718 & 115,281 & 757,884 & 21,729 & 2,042,641 & 703,390 \\
 \SetRow{azure8}                               &  & loc. feas. sets    & adversary & 25 & 33 & not available & 49 & 65 & 81 & 703 & 1,225 & 1,891 & 3,123 & 305,835 & 2,479 \\
%
%
\multicolumn{2}{r}{\textbf{exact}}  & \textbf{game value} &  &  \textbf{0.000} & \textbf{-0.0416}        & \textbf{-0.0251} & \textbf{-0.0236} & & \textbf{-0.0392}           &\textbf{0.2148}&
\textbf{0.1072}  & \textbf{0.0240} &\textbf{0.5155}&\textbf{0.9520}&\textbf{0.1894}\\
\SetRow{red9} \multirow{6}{*}{\rotatebox[origin=c]{90}{lossy imperfect-recall}}& \multirow{6}{*}{\rotatebox[origin=c]{90}{abstraction of folded}}& \textbf{game value} &  &  \textbf{-0.166} & \textbf{-0.0450} & \textbf{-0.0271} & \textbf{-0.0262} &  & \textbf{-0.0392} & \textbf{0.0888} & \textbf{0.0623} & \textbf{0.0004}  & \textbf{0.3642} & \textbf{0.5858} & \textbf{0.1894}\\
%
%
\SetRow{red8}  &  & nodes    & & 1,480 & 36,429 & 512,766 & 5,574,547 & 445,611,353 &92,309,616& 184,729       & 7,502,765 & 298,052,671 & 36,269 &3,073,197&           7,203,775 \\
\SetRow{red9}                                  &  & infosets & team &  64 & 287        & 1,146 & 4,453 & 67,803 &  91,021        &1,930& 11,981 & 70,636 & 2.513&198,329& 37,435    \\
\SetRow{red9}                                  &  & infosets & adversary & 12         & 16 & 20 & 24 & 32 &  40        &228& 400 & 620 & 630 &49,584& 816    \\
\SetRow{red8}                                    &              & actions  & team &    145    &   899     &   5,721     & 37,231& 1,517,163       &   518,591     &3,913&       30,263   & 281,981 & 5,759 &492,599&75,499\\
\SetRow{red8}                               &    & actions & adversary &   25     & 33 & 41 & 49 & 65 &   81     &457& 801 & 1,241 & 1,443 &123,153&   1,633      \\
\end{tblr}
}
\label{tab:size}
\end{table*}

\subsection{Experimental Setting}

\textbf{Game Instances}. We conduct our experimental activity with a subset of instances customarily adopted as testbed for adversarial team games, \emph{e.g.}, by~\citet{zhang2022team}. More precisely, we use multi-player parametric versions of Kuhn~\cite{kuhn1950simplified} and Leduc~\cite{southey2005bayes} poker where one player is the adversary and the remaining players collude against him. We use the following values for the parameters.
    In \emph{Kuhn} poker,  team members are from 2 to 3, ranks are from 3 to 6.
    In \emph{Leduc} poker, team members are from 2 to 3, the maximum number of bets allowed in each betting round is from 1 to 5, ranks are from 2 to 5, suits are 3.
Details are provided in Appendix~\ref{app:exp}.

\textbf{Representations}.
By exploiting the interpretability of our representation, we design pruning and/or abstraction techniques reducing the tree size. In our experiments, we focus on the following reduced representations (more details on the conversions are in Appendix~\ref{app:pruningtechniques}, while  Appendix~\ref{app:figures} provides a conversion example per representation).
    
    \emph{Basic}: it is the game returned by Algorithm~\ref{alg:PubTeam}.
    
    \emph{Pruned}: 
    The play of a public action by a team member allows to prune, in the following part of the tree, the private states with a different recommendation. 
    Thus, we safely discard a subset of the private states reducing the number of possible prescriptions in subsequent nodes. The pseudocode is in Algorithm~\ref{alg:pruned_PubTeam} in Appendix~\ref{app:pruningtechniques}. 
    %
    
    \emph{Folded}: while pruned representation allows to safely reduce the number of possible private states, it does not address the large number of nodes in the converted game. This is due to the fact that Algorithm~\ref{alg:PubTeam} preserves the chance sampling as in the original game. However, we can avoid to sample a private state and instead keep a belief over the private states of the team members.
    
    \emph{Imperfect-recall abstraction of folded}: the folded representation may include multiple replicas of the same subgames reachable from different histories. We connect the corresponding infosets in the subgames over all the replicas, thus leading to an imperfect-recall game that is \emph{well-formed} in the sense of \citet{DBLP:conf/icml/LanctotGBB12}. 
    
    \emph{Lossy imperfect-recall abstraction of folded}: we discard all coordinator's prescriptions recommending the same action (Fold or Raise or Call) to every private state. The resulting game keeps to be well-formed. 

\textbf{Algorithms}. 
We test our representations with state-of-the-art no-regret algorithms for 2p0s games as \emph{Counter Factual Regret plus} (CFR+)~\cite{Tammelin2014SolvingLI} and \emph{Outcome Sampling Monte Carlo Counter Factual Regret} (OS-MC-CFR)~\cite{Lanctot2009MonteCS}. We recall that, as showed by~\citet{DBLP:conf/icml/LanctotGBB12}, CFR-based algorithms converge to the equilibrium even with imperfect-recall games satisfying well-formed properties as for the case of our representations. To abstract from the specific implementation details, we use OpenSpiel~\cite{DBLP:journals/corr/abs-1908-09453}.

\subsection{Experimental Results}

\textbf{Representation Size and Game Value}. In Tab.~\ref{tab:size}, we report the size of the game instances obtained by our conversions, and we compare them with the size of the representation used by~\citet{zhang2022team}. Although it is not based on a tree, there is a strict connection between their representation and ours. In particular, their \emph{locally feasible sets} 
are strictly related to our actions, as they are two different approaches to describe the Cartesian product of the team members' actions given their possible private states. 
Both locally feasible sets and actions determine the size of the two representations and are helpful to analyze how their sizes grow as the size of the extensive form increases.

Interestingly, our basic representation is exponentially smaller than the reduced normal form. Furthermore, our information-lossless general-purpose techniques allow a dramatic reduction of the size of the tree up to 3 orders of magnitude. Furthermore, by using the imperfect-recall abstraction of the folded representation, we  obtain a number of actions smaller than the number of locally feasible sets, suggesting that our representation is more efficient than that by~\citet{zhang2022team}, while guaranteeing explainability and the possibility of designing abstractions. In particular, in some instances (\emph{e.g.}, 21L523), the number of actions in our representation is almost the half than the locally feasible sets. We also observe that our lossy imperfect-recall abstraction of the folded representation dramatically reduces the game size suffering from a small loss in terms of game value, averagely, $0.086$.

\textbf{Exploitability vs.~Iterations/Running Time}. 
 We show in Fig.~\ref{fig:exploitability} the dependency of the exploitability with CFR+ and OS-MC-CFR on the iterations and time for  instance 21L133. Considering the number of iterations, except for a negligible term, the exploitability with CFR+ is the same for all the information-lossless representations, while the convergence of the lossy abstraction is slightly faster. However, considering the execution time, we can fully appreciate the importance of developing techniques to reduce the representation size. Indeed, CFR+ applied to our lossy abstraction is more than one order of magnitude faster than the other representations, and even three order of magnitude faster than the basic one. This is due to the need for performing full traversals of the tree at every iteration.
 At the same time, a trajectory sampling algorithm like OS-MC-CFR benefits when reducing the number of infosets, as the variance of the estimates on the regret reduces. Remarkably, the adoption of abstractions unlocks a significant scale-up the algorithms in practice.
 
 Finally, we remark that we cannot directly compare the running time of our algorithms with that by~\citet{zhang2022team} due to the use of different technologies and implementation details. 
 Notably, our approach and that by~\citet{zhang2022team} take in input representations whose size increases with the same dependency in the size of the extensive form, suggesting that, abstracting from implementation details, the relative perfomances of these two approaches are similar to those of no-regret learning and LP with 2p0s games, see, \emph{e.g.}, \citet{DBLP:conf/icml/ZhangS20}. We point the reader to  Appendix~\ref{app:design_choice} for a detailed discussion.
 
\begin{figure}
    \centering
    \includegraphics[scale=0.47]{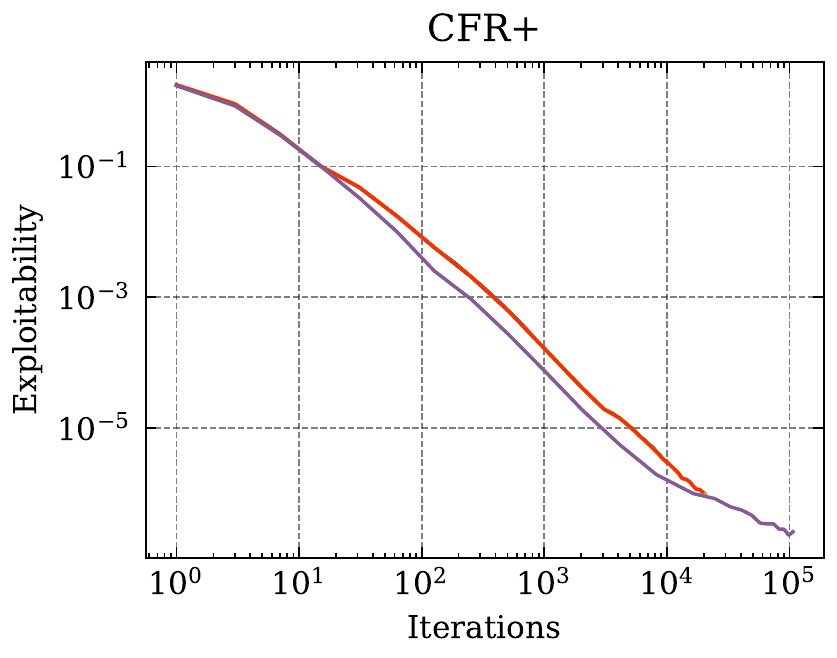}
    \includegraphics[scale=0.47]{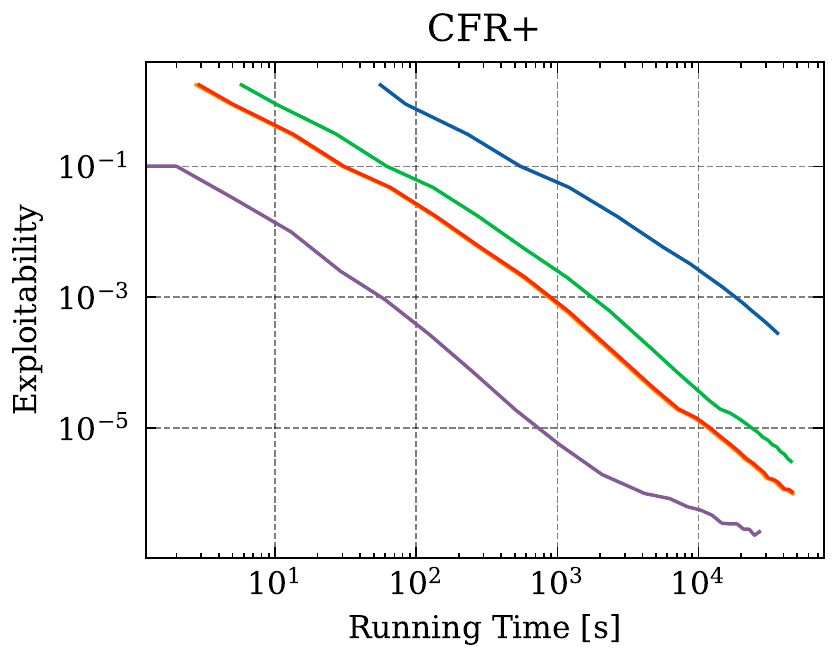}
    \includegraphics[scale=0.47]{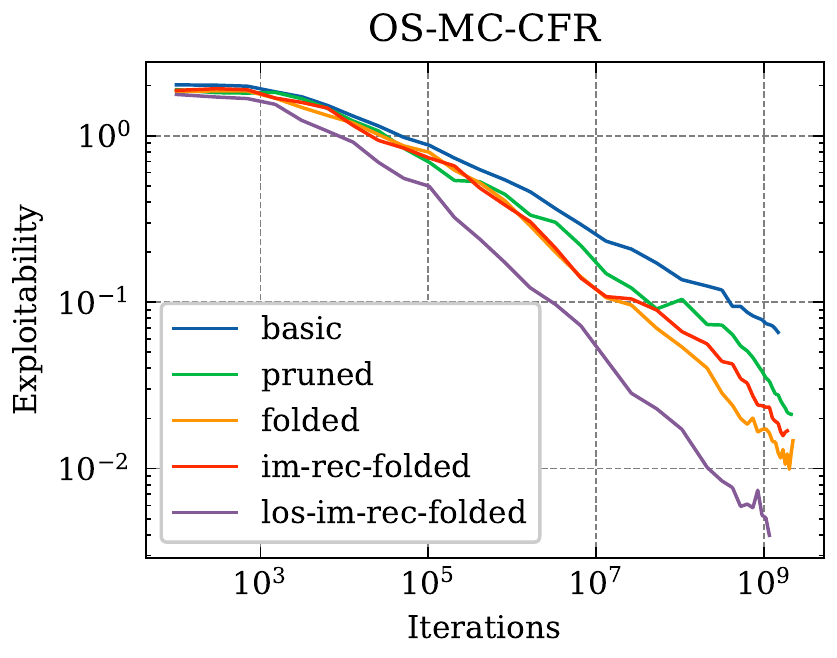}
    \includegraphics[scale=0.47]{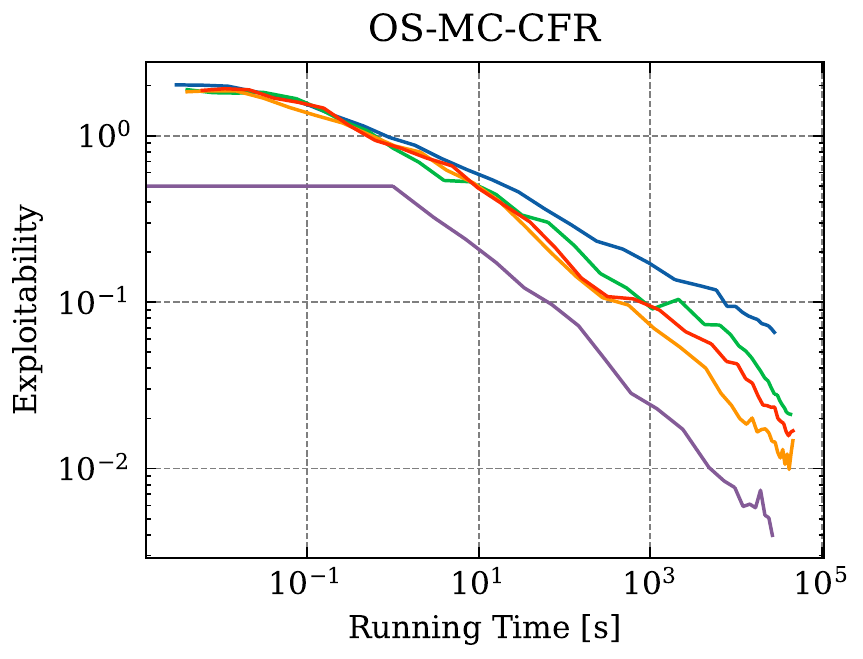}
    \caption{Exploitability of CFR+ and OS-MC-CFR with 21L133 game in the number of iterations and time (seconds).}
    \label{fig:exploitability}
    \vspace{-0.5cm}
\end{figure}
\section{Conclusions and Future Work}\label{sec:conclusion}
We bridge the realm of sequential 2-player zero-sum games with that of adversarial team games. In particular, we show that any sequential adversarial team game satisfying mild assumptions can be converted into a suitable sequential 2-player zero-sum game such that a Nash Equilibrium in the converted game is strategically equivalent to a TMEcor in the original game. This equivalence enables the adoption of successful tools for solving huge 2-player zero-sum games to adversarial team games. Furthermore, thanks to the high explainability of our representation, pruning and abstraction techniques can be easily designed to dramatically reduce the size of the tree. In particular, we empirically show that we can produce a game representation smaller than that provided by the current state of the art without any loss of information while guaranteeing explainability. Furthermore, we provide, to the best of our knowledge, the first example of abstractions for adversarial team games, showing that it allows a remarkable reduction of the tree size suffering from a small loss, and the first attempt to use no-regret learning with this class of games.
Open challenges include the design of \emph{ad hoc} algorithms for abstractions, no-regret learning, and subgame solving (whose potential impact needs to be evaluated) capable of exploiting the structure of these games to scale up to huge instances.

\bibliographystyle{icml2022}
\bibliography{example_paper}

\clearpage
\onecolumn
\appendix
\section{Proofs Omitted from the Main Paper} \label{app:proofs}
\sizepublicturn*
\begin{proof}
We provide the following procedure which returns in output a public-turn-taking game. This is achieved by assigning each level of the converted game to a player, alternating between them (chance included). Then, we add all the histories of the original game one by one, while forcing that at each level only the player corresponding to that level can play. If the history has no action assigned to the level's player, then we can add a dummy player node, with only a single action, and try to prosecute with the actions of the original history in the next node. The visibility of the added action is ``unseen'' for all players except the one playing it.

This procedure guarantees to get a strategically equivalent game by adding at most $\mathcal O((|\mathcal N| + 1)|\mathcal H|)$ for any of the $|\mathcal H|$ histories in the original game. This proves that the number of histories in the converted game is $\mathcal O((|\mathcal N| + 1)|\mathcal H|^2)$.
\end{proof}

\lemmaunoeq*
\begin{proof}
To show that \ref{th:lemma1} holds, we show how for any pure joint strategy $\pi_\team$ for the team in $\mathcal G$ it is possible to construct an equivalent pure strategy $\pi_t$ in $\mathcal G'$. 
Such goal can be achieved by recursively by traversing both $\mathcal G$ and $\mathcal G'$ while constructing $\pi_t$. 

First, consider the empty histories $h_\emptyset$ and $h_\emptyset'$ for which it trivially holds that $h_\emptyset' = \algoname{PuBTeamConv}(h_\emptyset)$.

Let $h$ and $h'=\text{PubTeamConv}(h, \mathcal G, \mathcal G')$ be the nodes currently reached by the algorithm $\algoname{PubTeamConv}$ respectively in $\mathcal G$ and $\mathcal G'$. We thus have the guarantee that $h$ and $h'$ are both terminal or both share the same player (thanks to public turn taking).
Therefore, we can differentiate between the following cases:
\begin{itemize}
\item Case \textbf{team member node}

Let $a = \pi_{\mathcal T}[ I (h)]$ be the action specified by $\pi_{\mathcal T}$ to be taken at $I(h)$. We can construct a prescription $\Gamma = (\pi_{\mathcal T}[I])_{I \in \mathcal S[h]}$ equivalent to the pure strategy $\pi_{\mathcal T}$ in this public state. We set $\pi_t[I'(h')] = \Gamma$, and prosecute our proof from the two reached nodes $h'\Gamma $ and $ha$. The construction procedure \algoname{PubTeamConv} guarantees in fact that $h'\Gamma a = \algoname{PubTeamConv}(ha)$.

\item Case \textbf{chance or opponent node}

$\pi_o$ and $\pi_c$ are common to both the traversals. This guarantees that the action $a$ suggested by the policy is equal, and by construction of the conversion procedure $h'a' = \algoname{PubTeamConv}(ha)$. We can thus proceed considering $h'a$ and $ha$.

\item Case \textbf{terminal node}

By construction, they have the same value for all players. 
\end{itemize}
This concludes the proof.
\end{proof}

\lemmadueeq*

\begin{proof}
We can prove Lemma~\ref{th:lemma2} recursively by traversing both $\mathcal G'$ and $\mathcal G$ while constructing the equivalent pure strategy in the original game. We start by $h_\emptyset'$ and $h_\emptyset$. We know that $h_\emptyset' = \algoname{PubTeamConv}(h_\emptyset)$.

As in the proof of Lemma~\ref{th:lemma1} let $h$ and $h'=\text{PubTeamConv}(h, \mathcal G, \mathcal G')$ be the nodes currently reached by the algorithm $\algoname{PubTeamConv}$ respectively in $\mathcal G$ and $\mathcal G'$. We thus have the guarantee that $h$ and $h'$ are both terminal or both share the same player (thanks to public turn taking).
Hence, we can differentiate between the following cases:
\begin{itemize}
\item Case \textbf{team member node}

Let $\Gamma = \pi_t[I'(h')]]$ be the prescription specified by $\pi_t$ to be taken at $I'(h')$. We can extract the prescribed action $a = \Gamma[I]$ to be played in history $h$. We set $\pi_{\mathcal T}[I(h)] = a$, and prosecute our proof from the two reached nodes $h'\Gamma$ and $ha$. The \algoname{PubTeamConv} procedure guarantees, indeed, that $h'\Gamma = \algoname{PubTeamConv}(ha)$.

\item Case \textbf{chance or opponent node}

$\pi_o$ and $\pi_c$ are common to both the traversals. This guarantees that the action $a$ suggested by the policy is equal, and by construction of the conversion procedure $h'a' = \algoname{PubTeamConv}(ha)$. We can thus proceed with the proof considering $h'a$ and $ha$.

\item Case \textbf{terminal node}

By construction, they have the same value for all players.
\end{itemize}
This concludes the proof.
\end{proof}
\theoremequivalence*
\begin{proof}
The proof follows trivially from Lemmas~\ref{th:lemma1}~and~\ref{th:lemma2}. Indeed, one can resort to the proof of Lemma~\ref{th:lemma1} to obtain, for each strategy $\pi_\team$ in $\mathcal G$, a payoff-equivalent strategy $\pi_t$ in $\mathcal G'$. The other direction can be obtained by following the proof of Lemma~\ref{th:lemma2}.  
\end{proof}
\equivalenceNE*
\begin{proof}
By hypothesis that $\mu_t^*$ is a NE, we have that:
\begin{equation*}
\mu_t^* \in \arg\max_{\mu_t \in \Delta^{\Pi_t}} \min_{\mu_o \in \Delta^{\Pi_o}} \sum_{\substack{\pi_t \in \Pi_t \\ \pi_o\in \Pi_o \\ \pi_c \in \Pi_c}} \mu_t(\pi_t) \mu_o(\pi_o) \mu_c(\pi_c) u_t(\pi_t, \pi_o, \pi_c).
\end{equation*}
We need to prove:
\begin{equation*}
\sigma(\mu_t^*) \in \arg\max_{\mu_{\mathcal T} \in \Delta^{\Pi_{\mathcal T}}} \min_{\mu_o \in \Delta^{\Pi_o}} \sum_{\substack{\pi_{\mathcal T} \in \Pi_{\mathcal T} \\ \pi_o\in \Pi_o \\ \pi_c \in \Pi_c}} \mu_{\mathcal T}(\pi_{\mathcal T}) \mu_o(\pi_o) \mu_c(\pi_c) u_{\mathcal T}(\pi_{\mathcal T}, \pi_o, \pi_c)
\end{equation*}

Let $\min_{TMEcor}(\mu_{\mathcal T})$ and $\min_{NE}(\mu_t)$ be the inner minimization problem in the TMECor and NE definition respectively.

\textit{Absurd.} Suppose $\exists \; \bar \mu_{\mathcal T}$ with a greater value than $\sigma(\mu_t^*)$. Formally:
\begin{equation*}
\min_{TMEcor}(\bar \mu_{\mathcal T}) > \min_{TMEcor}(\mu_t^*).
\end{equation*}
In such a case, we could define $\bar \mu_t = \rho(\bar \mu_{\mathcal T})$ having value:
\begin{equation*}
\min_{NE}(\bar \mu_t) = \min_{TMEcor}(\bar \mu_{\mathcal T}) > \min_{TMEcor}(\sigma(\mu_t^*)) = \min_{NE}(\mu_t^*),
\end{equation*}
where the equalities are due to the payoff equivalence. However this is absurd since by hypothesis $\mu_t^*$ is a maximum. Therefore necessarily:
\begin{equation*}
\sigma(\mu_t^*) \in \arg\max_{\mu_{\mathcal T} \in \Delta^{\Pi_{\mathcal T}}} \min_{NE}(\mu_{\mathcal T}).
\end{equation*}
This concludes the proof.
\end{proof}

\sizecommonext*
\begin{proof}
Consider first the opponent and chance nodes. Such nodes are copied unaltered, hence this operation does not increase the total number of nodes. 
Now, let us focus on team players' nodes. In order to prove the Theorem we have to show that, for any $h'\in\mathcal H'$, only one infoset of the original game can be mapped to the public state $\mathcal{S}_t(h')$. This ensures that node $h'$ has the same number of actions in output as infoset to which it is mapped, hence the overall number of nodes does not increase. 

Fix a node $h\in\mathcal H$ and let $h' = \text{PubTeamConv}(h, \mathcal{G}, \mathcal{G}')$. The public state is characterized by all the actions publicly observed by the team. Formally, the set of such actions in $\mathcal{G}$ at history $h$ is: 
\begin{equation*}
    \Lambda = \left\{a\in h\mid  Pub_\team(a) = \text{pub}\right\}.
\end{equation*}
Assume now, by absurd, that the set $\mathcal{S}_\team(h)$ contains two distinct information sets $I,J\in\I$. This would mean that there exists $a_I\in h_I\setminus\Lambda, a_J\in h_J\setminus\Lambda$ for $h_I\in I$, $h_J\in J$ such that:
\begin{equation}
  Pub_{p}(a_I) \neq Pub_p(a_J),
  \label{eq:info}
\end{equation}
where $p=\iota(h)\in\team$. Intuitively, the condition expressed by Equation~\eqref{eq:info} states that the two infosets are distinct.

However, this is impossible as the condition violates the assumption of A-loss refinement and common external information. This results in generating a node $h'$ with the same number of actions as $h$, hence the dimension of the TPI $\mathcal{G}'$ does not increase with respect to the dimension of $\mathcal{G}$.
\end{proof}

\propos*
\begin{proof}
Trivially, any aggregation of states or actions defined in the extensive form leads to a game that can be converted in the corresponding team-public-information representation by using Algorithm~\ref{alg:PubTeam}.
On the other hand, not all abstractions in the public information game can be reflected in the original one.
As an example, consider Figure~\ref{fig:simple_example}. If we perform \textit{action abstraction} in the converted game, by collapsing action ''0:A, 1:C'' onto action ''0:A, 1:D'', this abstraction cannot be remapped onto the original game.
This happens because such abstraction corresponds to a constraint on the possible strategies that Player 1 can choose since we are forbidding him to play any pure strategy that requires to play action A at infoset 0 and action C at infoset 1.
Such an abstraction does not modify the original game structure, since all A, B, C, D may be played for some specific prescription. 
\end{proof}

\section{Information Structure in Team Games} \label{app:info_structure}
The core problem of finding a TMEcor in adversarial team games resides in \textit{asymmetric visibility} since team members have a private state that does not allow creating a perfect recall joint coordination player by trivially merging the players without any modification of their information structure.

In the following, we characterize the possible types of asymmetric visibility that may cause imperfect recall for the joint player, and singularly address them.

\begin{itemize}
\item \textbf{Non-visibility over a team member's action}. If a team member plays an action hidden from another team member, the joint team player would have imperfect recall due to the forgetting of his own played actions. This source of imperfect recallness can be avoided in a TMEcor by considering the shared deterministic strategies before the game starts, thanks to \textit{ex-ante coordination}. This allows us to know a priori the exact actions played by team members in each node. Thus it is safe to apply a perfect recall refinement in the original game, which corresponds to always considering the chosen action of a team member as $\mathrm{obs}$ by other team members.

\item \textbf{Non-visible game structure}. Consider two nodes in the same information set for a player before which the other team member may have played a variable number of times, due to a chance outcome non-visible to the team member of these nodes. In this case, a perfect recall refinement is not applicable to distinguish the nodes, because it would give the joint coordinator information that is private of the current player. To solve this edge case, we require the property of public turn-taking.

\item \textbf{Private information disclosed by chance/adversary to specific team members.} It is the most complex type of non-visibility, since in a TMEcor we have no explicit communication channels through which to share information, and therefore this type of joint imperfect recall can only be addressed by considering a strategically equivalent representation of the game in which at most one of the team players has private information. 
\end{itemize}

\section{Pruning and Abstraction Techniques to Generate More Concise Representations}
\label{app:pruningtechniques}

\begin{algorithm}[!htb]
\caption{Pruned Public-Team Conversion}\label{alg:pruned_PubTeam}
\begin{algorithmic}[1]
	\Function {ConvertGame}{$\mathcal G$}
		\State initialize $\mathcal G'$ new game
		\State $\mathcal N' \gets \{t,o\}$
		\State $h'_{\emptyset} \gets$ \Call{PubTeamConv}{$h_{\emptyset}, \mathcal G, \mathcal G'$} \Comment{new game root}
		\State \Return{$\mathcal G'$}
	\EndFunction
	\item[] 
	\Function{PubTeamConv}{$h$, $\mathcal G$, $\mathcal G'$, $\boldsymbol{\mathcal X}$}
		\State initialize $h' \in \mathcal H'$
  		\If {$h \in \mathcal Z$} \Comment{terminal node}
  			\State $h' \gets h' \in \mathcal Z'$
  			\State $u'_p(h') \gets u_p(h) \quad \forall p \in \mathcal N$
  		\ElsIf {$\mathcal P(h) \in \{o, c\}$} \Comment{opponent or chance}
  			\State $\mathcal P'(h') \gets \mathcal P(h)$
  			\State $\mathcal A'(h') \gets \mathcal A(h)$
  			\If {$h$ is chance node}
  				\State $\sigma_c'(h') = \sigma_c(h)$
  			\EndIf
  			\For {$a' \in \mathcal A'(h')$}
  				\State $Pub_t'(a') \gets$ $\mathrm{obs}$ \textbf{if} $Pub_{\mathcal T}(a') = \mathrm{pub}$ \textbf{else} $\mathrm{unobs}$
  				\State $Pub_o'(a') \gets Pub_o(a')$
  				\State $h'a' \gets$ \Call{PubTeamConv}{$ha'$, $\mathcal G$, $\mathcal G'$, $\boldsymbol{\mathcal X}$}
  			\EndFor 
  		\Else \Comment{team member}
  			\State $\mathcal P'(h') = t$
  			\State $I \gets I(h)$
  			\State $\mathcal A'(h') \gets \bigtimes_{J \in \mathcal  S_{\team_I}(h) \boldsymbol{: \nexists J' \in \mathcal X \; matching \; J}} \mathcal A(I)$ \label{lst:line:prescription} \Comment{prescriptions}
  			\For {$\Gamma' \in \mathcal A'(h')$}
  				\State $Pub_t'(\Gamma') \gets \mathrm{seen}, Pub_o'(\Gamma') \gets \mathrm{unseen}$
  				\State $a' \gets \Gamma'[I(h)]$ \Comment{extract chosen action}
  				\State \boldsymbol{$\mathcal X \gets \mathcal X \cup \{J : \Gamma'(J) \neq a'\}$} \Comment{update $\mathcal X$ removing incompatible private states}
  				\State initialize $h'' \in \mathcal H'$
  				\State $\mathcal A'(h'') \gets \{a'\}$
 				\State $\mathcal P(h'') = c$
  				\State $Pub_t'(a') \gets \mathrm{seen}$
 				\State $Pub_o'(a') = Pub_o(a')$
 				\State $\sigma_c'(h'') =$ play $a'$ with probability 1
  				\State $h''a' \gets$ \Call{PubTeamConv}{$ha'$, $\mathcal G$, $\mathcal G'$, $\boldsymbol{\mathcal X}$}
  				\State $h'\Gamma \gets h''$ 
  			\EndFor 
  		\EndIf
  		\State \Return $h'$
	\EndFunction
\end{algorithmic}
\end{algorithm}

As aforementioned, in the worst case, our representation cannot have a size upper bounded by a polynomial in the size of the extensive form unless $\mathsf{P}= \mathsf{NP}$. Nevertheless, in many cases, the game tree generated by our conversion may contain redundant information, and thus it can be compressed without any loss of information. In the following, we provide different procedures to generate a much more concise team-public-information representation of an adversarial team game. 

\textbf{Pruned Representation}. 
Whenever the coordinator prescribes a team member to play an action $a$ such that $Pub_{\mathcal{T}}(a) = \mathrm{pub}$, where $\mathcal{T}$ is the team, the possible private states in which the player may be can be reduced after observing the action chosen from the given prescription, and this may also impact on the possible private states of other team members. Since the number of prescriptions depends on the number of private states, a dramatic reduction of the number of prescriptions is achieved without any loss of information.

The pseudocode of the procedure to directly generate a TPI in its pruned representation is provided in Algorithm~\ref{alg:pruned_PubTeam}. It takes as input the same vEFG as Algorithm~\ref{alg:PubTeam}. In particular, the procedure is obtained by a simple modification of Algorithm~\ref{alg:PubTeam}, adding a parameter $\mathcal X$ in \algoname{PubTeamConv} which is used to store the private states that can be excluded in the following part of the tree once played a public action. To ease the visualization, modifications to Algorithm~\ref{alg:PubTeam} are highlighted in bold.
By excluding every information set in $\mathcal X$ when building the prescription in Line~\ref{lst:line:prescription}, we can effectively prune the number of private states to be considered by the coordinator.
An example of pruned representation is provided in Figure~\ref{fig:convPrunedExample} in Appendix~\ref{app:figures}.

\textbf{Folded Representation}. In the basic TPI game produced by Algorithm~\ref{alg:PubTeam}, chance outcomes are explicitly represented in the game tree independently of the visibility of the outcomes, thus branching the game tree into different subgames according to the specific outcome. Consider the case of a chance action that can be observed by a team member and not observed by the adversary. In the converted game, such an action is not observable to any player, and therefore it can be safely postponed as long as no specific action depends on it. The folding representation takes advantage of this property to avoid sampling these types of private states. Instead, it samples an action from the prescription depending on the probability that a specific private state is present at a given point in the game, given the previous actions of all players and their current strategies. The dummy chance nodes $h''$ instantiated in Algorithm~\ref{alg:PubTeam} therefore may present different actions, each with a probability given by the sum of the probabilities of the private states for which that action has been prescribed. 

This approach can be considered as a hybrid game-specific representation between the public tree of the team and the original tree of the adversary, allowing a dramatic reduction of the size of games with private signals such as Poker.
To apply the folded representation to Kuhn and Leduc Poker, we maintain a belief over the possible joint cards assigned to the team members and perform a Bayesian update whenever new information is disclosed. In Poker, this happens by choosing a public action after a prescription and drawing a public card.
This belief can then be integrated with full history information (adversary and public card) to determine the probability of picking specific actions from a given prescription, and to evaluate the payoffs at the terminal nodes.
The information state of the coordinator is described by the full sequence of prescriptions given and public information for the team. This type of belief and reach probability are not novel as they have been introduced by \citet{Foerster2019BayesianAD} and \citet{Sokota2021SolvingCG} in cooperative multiagent RL settings.

The name \textit{Folded Representation} is inspired by the fact that trajectories with the same public actions but different private states are folded one over the other in the converted game. An example of folded representation is provided in Figure~\ref{fig:convFoldedExample} in Appendix~\ref{app:figures}.

\textbf{Imperfect-Recall Abstraction of the Folded Representation}. This representation takes advantage of the fact that subgames rooted in information states, whose current belief and public actions are the same, correspond to the same state of the original game. Therefore those subgames share the same structure and the same payoffs.

Thus, we can avoid including the full sequence of prescriptions in the information set of each player. This does not directly reduce the number of nodes, but it reduces the number of information sets, simplifying the information structure of the game. This also reduces the space requirements to represent the strategies and simplifies the information structure of the coordinator. This abstraction technique is theoretically sound and leads to a \emph{well-formed} game in the sense by \citet{Lanctot_Gibson_Burch_Zinkevich_Bowling_2012}. Therefore, in these settings, as showed by~\citet{Lanctot_Gibson_Burch_Zinkevich_Bowling_2012}, no-regret algorithms converge to the equilibrium.
In our experiments, we employ this information state refinement technique on top of the folded representation. An example of imperfect-recall abstraction of the folded representation is provided in Figure~\ref{fig:convFoldedIRExample} in Appendix~\ref{app:figures}.

\textbf{Lossy Imperfect-Recall Abstraction of the Folded Representation}. The compression techniques used for generating the pruned and folded representations have a high impact whenever the coordinator's prescription includes different actions to different private states. These actions are observable to the team members. Since different actions are played at different private states, observing an action reveals the private state, thus simplifying the part of the games following such a prescription. On the other hand, whenever the coordinator prescribes the same action to every private state, playing an action does not reveal any information. Therefore, the public state keeps having a combinatorial size. Intuitively, the higher the degree of signaling (communication), the smaller the size of the tree.

The main idea behind our lossy abstraction is to discard all the uninformative prescriptions recommending to play the same card at every private state. More precisely, in our Poker instance, we discard from the game tree all the prescriptions recommending to play Fold at every private state, and we do the same for the cases of Call and Raise. Notice that such discarding is equivalent to forcing the coordinator to play those prescriptions with zero probability. Interestingly, this abstraction cannot be defined on the extensive-form game, while it can be defined on our representation.

In particular, we discard the above coordinator's actions from the folded representation and apply the imperfect-recall abstraction described above, thus obtaining a \textit{well-formed game} as defined in~\citet{Lanctot_Gibson_Burch_Zinkevich_Bowling_2012}.
An example of imperfect-recall abstraction of the folded representation is provided in Figure~\ref{fig:convAbFoldedIRExample} in Appendix~\ref{app:figures}.

\section{Comparison among the Representations} \label{app:figures}

We provide an example of extensive-form game and of the three conversions described in the paper in Figs.~\ref{fig:example}--\ref{fig:convFoldedExample}.
To ease the visualization, we focus on a cooperative game with no adversary.

\begin{figure}[H]
\centering
\includegraphics[scale=0.6]{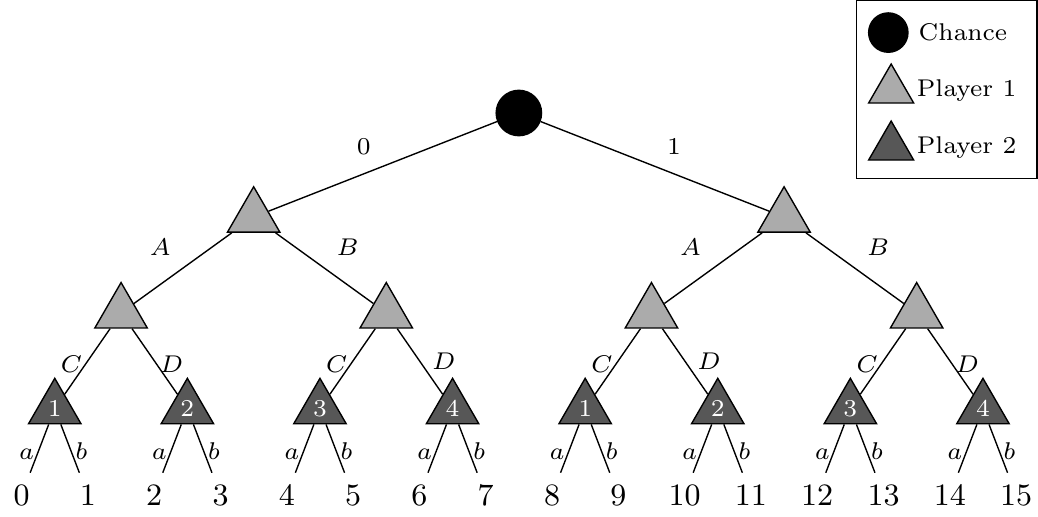}
\caption[Example of a cooperative game]{Extensive form of a 2-player team game with chance and without adversary, where Player 2 observes all actions except those of chance. Nodes of a player with same number are in the same infoset.}
\label{fig:example}
\end{figure}
\vspace{0.25cm}
\begin{figure}[H]
\centering
\includegraphics[scale=0.6]{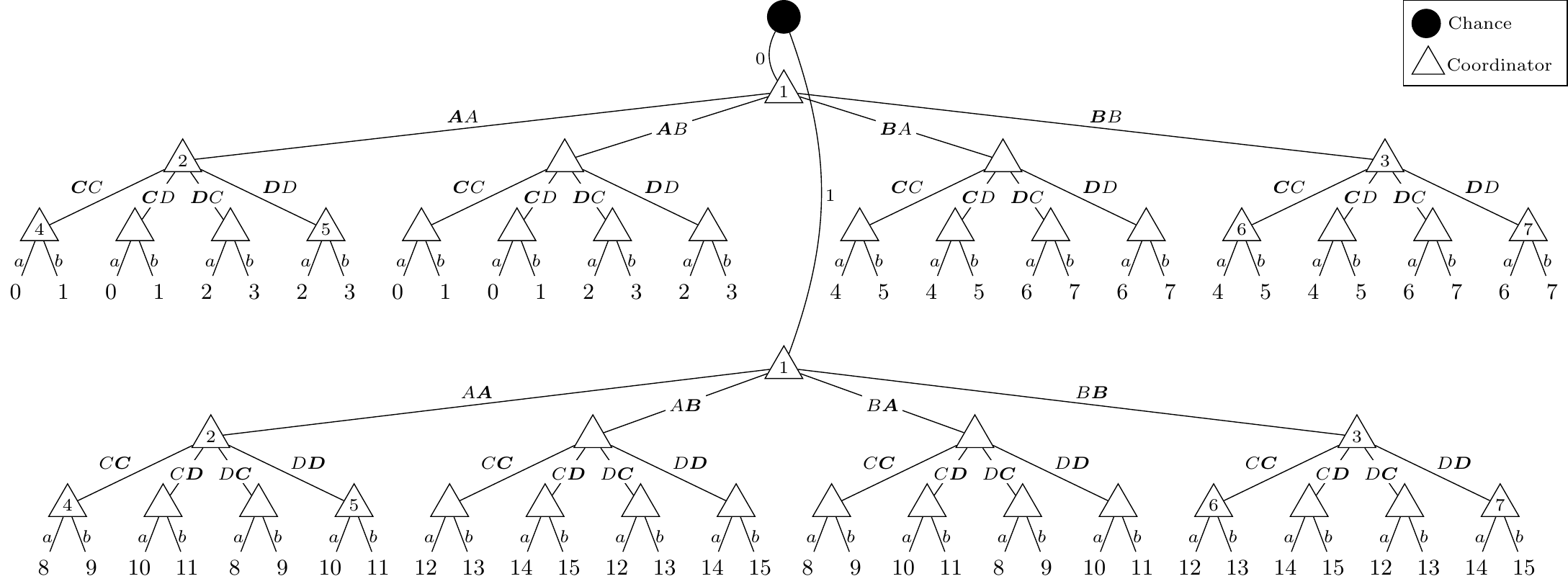}
\caption[Example of a converted game]{Team-public-information representation of the game depicted  in Figure \ref{fig:example}. Nodes of a player with same number are in the same infoset. For the sake of notation, dummy chance nodes are not represented, prescriptions include the action to take for private state $0$ and $1$, the action taken afterward is in bold in the prescription.}
\label{fig:convExample}
\end{figure}
\vspace{0.25cm}
\begin{figure}[H]
\centering
\includegraphics[scale=0.6]{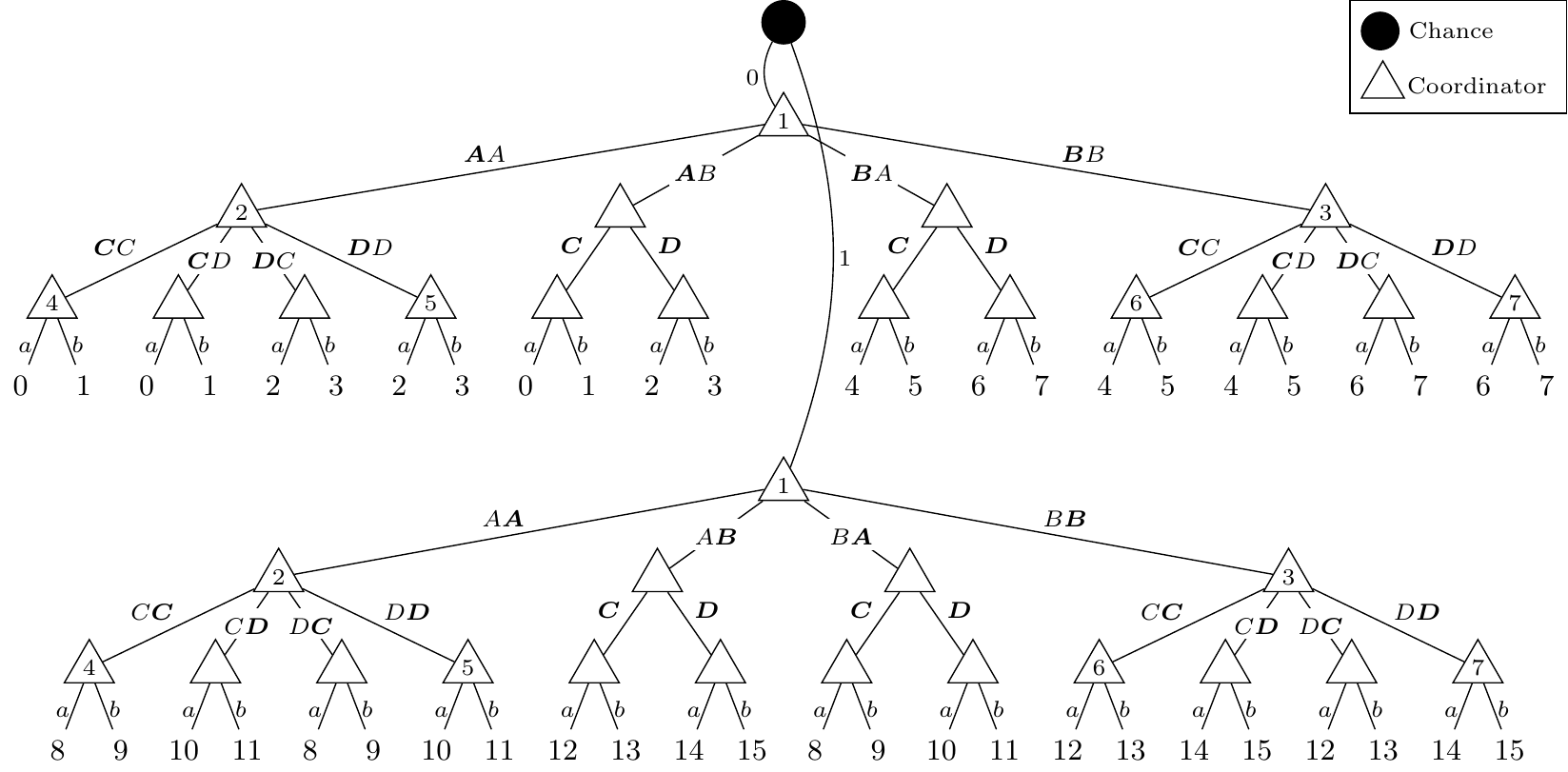}
\caption[Example of a converted pruned game]{Pruned team-public-information representation of the game depicted  in Figure \ref{fig:example}. Nodes of a player with the same number are in the same infoset. For the sake of notation, dummy chance nodes are not represented, prescriptions include the action to take for private state $0$ and $1$, the action taken afterward is in bold in the prescription.}
\label{fig:convPrunedExample}
\end{figure}
\vspace{0.25cm}
\begin{figure}[H]
\centering
\includegraphics[scale=0.6]{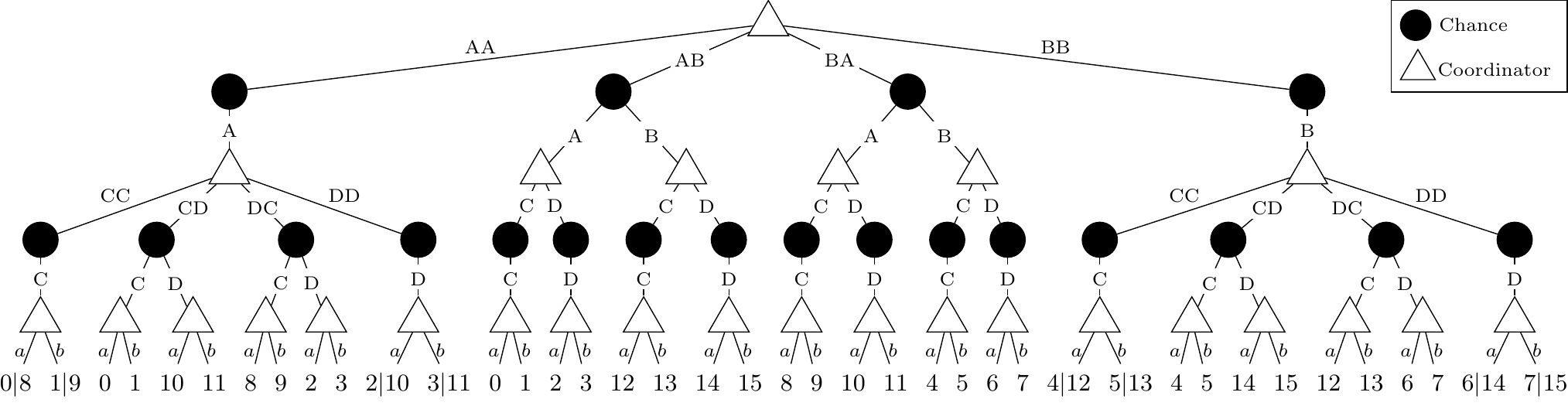}
\caption[Example of a converted folded game]{Folded team-public-information representation of the game depicted  in Figure \ref{fig:example}. For the sake of notation, prescriptions include the action to take for private state $0$ and $1$. Terminal nodes in the form $x|y$ represent a terminal node which has a weighted average value with respect to the outcomes $x$ and $y$.}
\label{fig:convFoldedExample}
\end{figure}
\vspace{0.25cm}
\begin{figure}[H]
\centering
\includegraphics[scale=0.6]{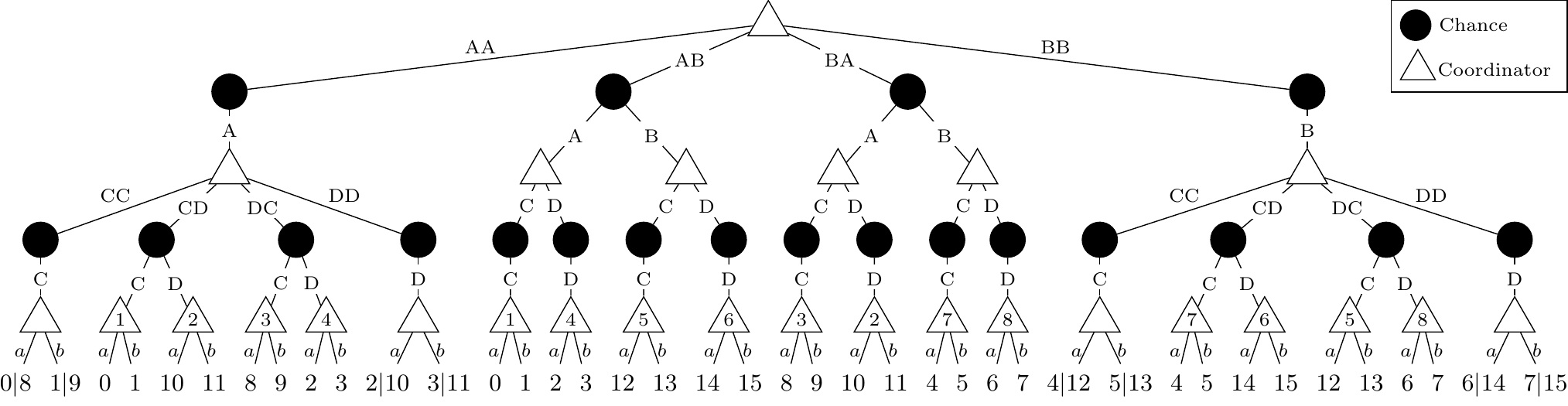}
\caption[Example of a converted folded game]{Imperfect-recall abstraction of the folded team-public-information representation of the game depicted  in Figure \ref{fig:example}. For the sake of notation, prescriptions include the action to take for private state $0$ and $1$. Terminal nodes in the form $x|y$ represent a terminal node which has a weighted average value with respect to the outcomes $x$ and $y$. Note that the coordinator has imperfect recall on the nodes characterized by the knowledge of a specific private state and sharing the same public history of played actions.}
\label{fig:convFoldedIRExample}
\end{figure}
\vspace{0.25cm}
\begin{figure}[H]
\centering
\includegraphics[scale=0.6]{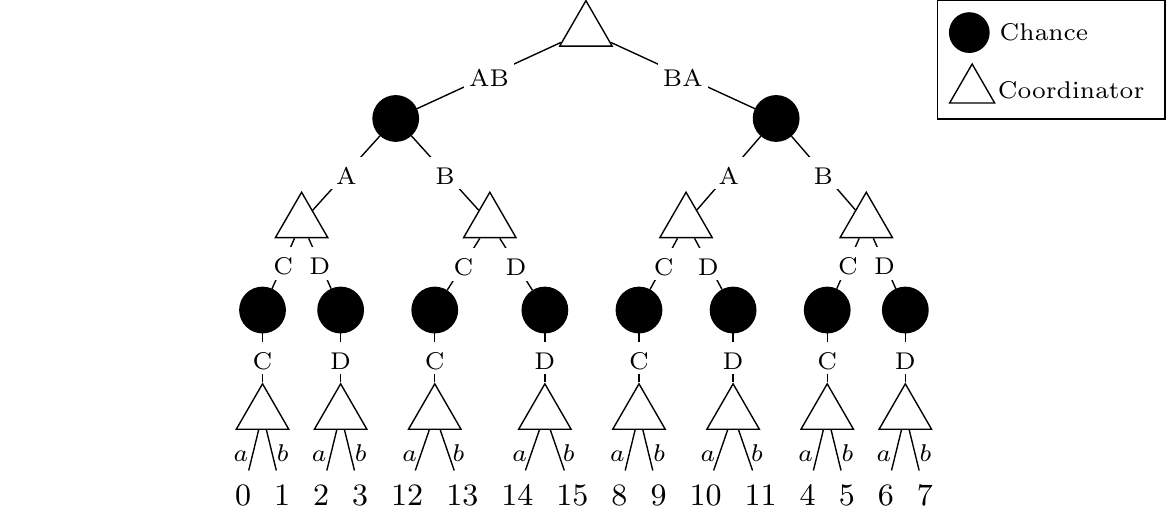}
\caption[Example of a converted folded game]{Lossy imperfect-recall abstraction of the folded team-public-information representation of the game depicted in Figure \ref{fig:example}. For the sake of notation, prescriptions include the action to take for private state $0$ and $1$. Terminal nodes in the form $x|y$ represent a terminal node which has a weighted average value with respect to the outcomes $x$ and $y$. In this case, the imperfect recall abstraction does not coarsen the information structure of the coordinator, since all the nodes at the last level are characterized by a different private state-public history combination.}
\label{fig:convAbFoldedIRExample}
\end{figure}

\FloatBarrier




\section{Experimental settings} \label{app:exp}
\subsection{Poker instances}
We refer to the three-player generalizations of Kuhn and Leduc poker proposed by \cite{Farina2018ExAC}.

Like all poker games, at the start of the game each player antes one to the pot, and receives a private card. Then players play sequentially in turn. Each player may check by adding to the pot the difference between the higher bet made by other players and their current bet (i.e. by matching the maximum bet made by others). Each player may fold whenever a check requires putting more money into the pot and the player instead decides to withdraw. Each player may raise whenever the maximum number of raises allowed by the game is not reached, by adding to the pot the amount required by a check plus an extra amount called raise amount. A betting round ends when all non-folded players except the last raising player have checked.

In \textbf{Kuhn poker}, there are three players and k possible ranks with k different ranks. The maximum number of raises is one, and the raising amount is 1. At the end of the first round, the showdown happens. The player having the highest card takes all the pot as payoff.

In \textbf{Leduc poker}, there are three players, k possible ranks having 3 cards in the deck each, and 1 or 2 raises. The raise amount is 2 for the first raise and 4 for the second raise. At the end of the first round, a public card is shown, and a new round of betting starts from the same player starting in the first round. In the end, the showdown happens. Winning players are having a private card matching the rank of the public card. If no player forms a pair, then the winning player is the one with the card with the highest rank. In the case of multiple winners, the pot is split equally.

\subsection{Implementation details}
We implemented the folded representation of both Kuhn and Leduc taking advantage of the OpenSpiel \cite{lanctot2019openspiel} framework. The framework allowed us to specify the game as an evolving state object and provided the standard resolution algorithms for the computation of a Nash Equilibrium in the converted game.

The experiments have been performed on a machine running Ubuntu 20.04 with a Intel Xeon Platinum 8358 (128) @ 3.300GHz CPU with 503 GB of memory. The implementation is single-threaded.

\subsection{Design choices}\label{app:design_choice}

 Customarily, researchers developed \emph{ad hoc} codes with different programming languages, each exploiting various programming optimization. This approach makes the comparison among the different algorithms difficult, hiding their actual scalability and sometimes emphasizing ancillary, non-central issues (\emph{e.g.}, adopting different versions of GUROBI or CPLEX). For this reason, we opted to adopt a tool publicly available to represent and solve the transformed games (\emph{i.e.}, OpenSpiel framework). While such a framework is general and readily available, some implementation choices for memory allocation and game representation slow down the performance with respect to the custom implementation by Zhang \& Sandholm (2021). The only metric allowing us to have a comparison not depending on the specific technology is the size of the optimization problem. This is the reason why we directly compare the number of variables and constraints of the linear program used by Zhang \& Sandholm (2021) with the number of infosets and actions of our game tree. Interestingly, there is a strict connection between the variables in Zhang \& Sandholm (2021) and our actions, and the number of constraints in Zhang \& Sandholm (2021) and our infosets. The interesting point is that the size of the problem by Zhang \& Sandholm (2021) and the size of problem (produced thanks to abstractions) are asymptotically the same as the size of the instance increases. This suggests that, asymptotically, the relative performance of solving our tree and the problem by Zhang \& Sandholm (2021) depend only on the two algorithms (as the size of the instances is the same). In particular, the relative performance between no-regret and linear programming is known \cite{DBLP:conf/icml/ZhangS20}.

\section{Plots}

We report a larger version of the exploitability plots provided in the main body of the paper.


\begin{figure*}[h!]
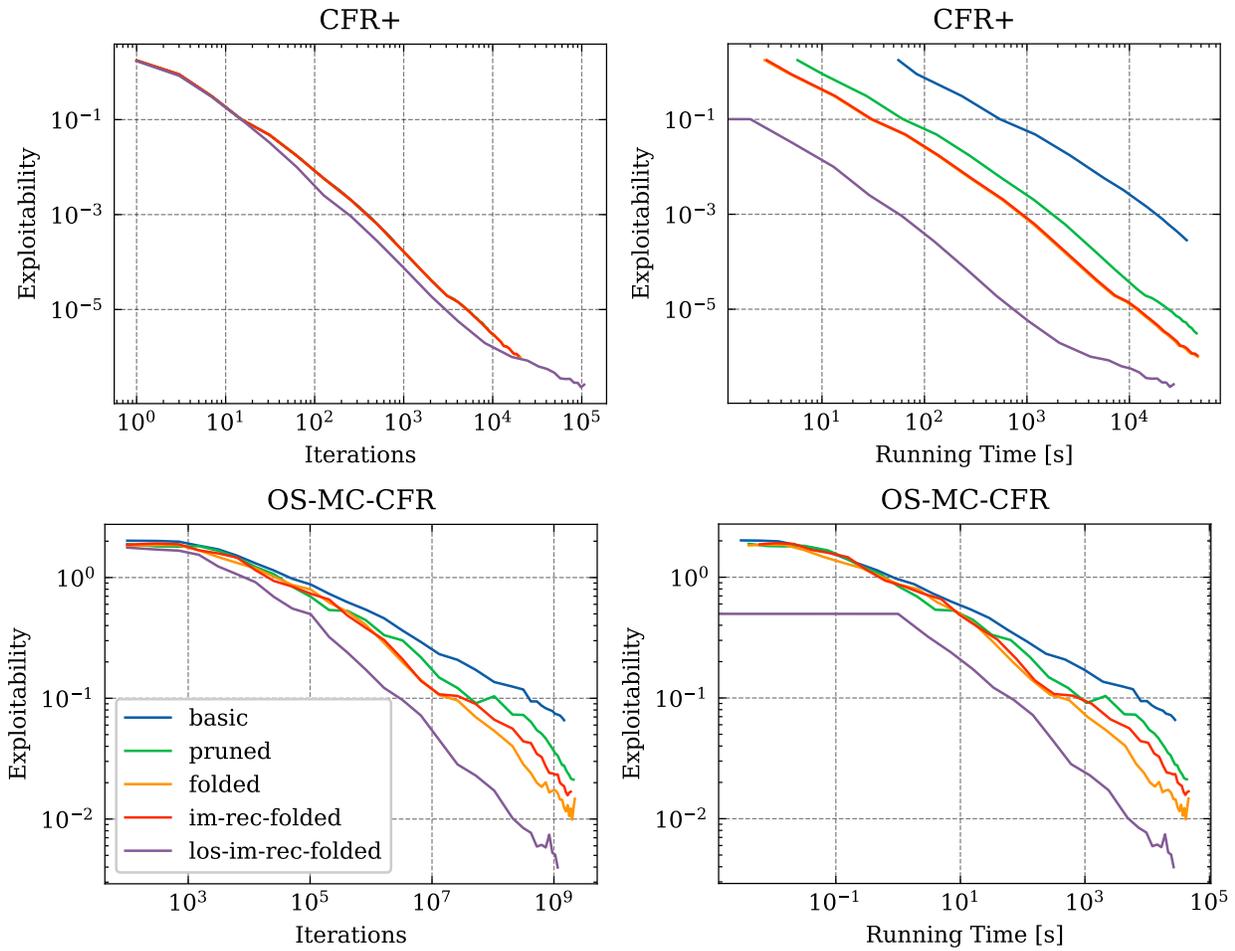

    \centering
    \includegraphics[scale=0.95]{images/CFR+_21L133Iterations.pdf}
    \includegraphics[scale=0.95]{images/CFR+_21L133Running.pdf}
    \includegraphics[scale=0.95]{images/OSMCCFR_21L133Iterations.pdf}
    \includegraphics[scale=0.95]{images/OSMCCFR_21L133Running.pdf}
    \caption{Exploitability of CFR+ and OS-MC-CFR with 21L133 game in the number of iterations and time (seconds).}
    \label{fig:exploitabilitylarge}
\end{figure*}

\end{document}